\documentclass[a4paper,12pt,oneside]{article}
\usepackage{amsmath,amssymb,mathtools,bm,lipsum}
\usepackage[utf8]{inputenc}
\usepackage{csquotes}
\usepackage[english]{babel}
\usepackage[T1]{fontenc}
\usepackage{amstext}
\usepackage{amsthm}
\usepackage{bbm}
\usepackage{enumerate}
\usepackage{hyperref}
\usepackage[capitalise]{cleveref}
\usepackage{braket}
\usepackage{dsfont}
\usepackage{comment}

\usepackage[verbose=true,letterpaper]{geometry}
\AtBeginDocument{
	\newgeometry{
		textheight=9.5in,
		textwidth=7in,
		top=1in,
		headheight=14.5pt,
		headsep=10pt,
		footskip=20pt
	}
}

\usepackage{fancyhdr}
\fancyhf{}
\pagestyle{fancy}

\fancyheadoffset{0pt}
\rhead{\scshape \thepage}
\lhead{\scshape\shortauthors}
\cfoot{\footnotesize\scshape\titlename}

\usepackage{titlesec}
\titleformat{\section}{\bfseries\scshape\Large}{\thesection}{1em}{}{}
\titleformat*{\subsection}{\scshape\bfseries\large}

\newcommand{\IN}{\mathbb{N}}

\newcommand{\IR}{\mathbb{R}}
\newcommand{\IC}{\mathbb{C}}
\newcommand{\R}{\mathbb{R}}

\newcommand{\N}{\mathbb{N}}

\newcommand{\hs}{\mathfrak{h}}
\newcommand{\HS}{\mathcal{H}}
\newcommand{\FS}{\mathcal{F}}

\newcommand{\hh}{\mathfrak{h}}

\newcommand{\cD}{\mathcal{D}} 
\newcommand{\cQ}{\mathcal{Q}}
\newcommand{\cF}{\mathcal{F}}

\newcommand{\eps}{\varepsilon}
\newcommand{\ph}{\varphi}

%Davids macros
\newcommand{\inn}[1]{\langle {#1} \rangle }

%Oli

\renewcommand{\:}{\colon}
\newcommand{\abs}[1]{\left| #1 \right|}
\newcommand{\nn}[1]{\left\| {#1} \right\| }

\renewcommand{\d}{\,\mathsf{d}}

\newcommand{\dG}{\mathsf{d}\Gamma}

%%%THMENVIRONMENTS
\numberwithin{equation}{section}
\newtheorem{theorem}{Theorem}[section]
\newtheorem{lemma}[theorem]{Lemma}
\newtheorem{proposition}[theorem]{Proposition}

\theoremstyle{definition}

\newtheorem{hyp}{Hypothesis}

\theoremstyle{remark}
\newtheorem{remark}[theorem]{Remark}
\newtheorem{ex}[theorem]{Example}

\crefname{hyp}{Hypothesis}{Hypotheses}
\Crefname{hyp}{Hypothesis}{Hypotheses}
\crefname{lemma}{Lemma}{Lemmas}
\Crefname{lemma}{Lemma}{Lemmas}
\crefname{enumi}{}{}
\Crefname{enumi}{}{}
\creflabelformat{enumi}{#2(#1)#3}
\crefname{equation}{}{}
\Crefname{equation}{}{}

\def\titlename{\scshape On Existence of Ground States in the  Spin Boson Model}
\title{\LARGE\scshape On Existence of Ground States\\in the  Spin Boson Model}
\author{David Hasler\footnote{\texttt{david.hasler@uni-jena.de}} \qquad Benjamin Hinrichs\footnote{\texttt{benjamin.hinrichs@uni-jena.de}} \qquad Oliver Siebert\footnote{\texttt{oliver.siebert@uni-jena.de}}\\ Friedrich-Schiller-University Jena\\\small Department of Mathematics\\[-.5em]\small   Ernst-Abbe-Platz 2\\[-.5em]\small  07743 Jena\\[-.5em] \small Germany}
\newcommand{\shortauthors}{D. Hasler, B. Hinrichs, O. Siebert}

\newcommand{\Rsl}{R_{n}}

\newcommand{\Id}{{\mathds{1}}}

\newcommand{\Hf}{H_{\mathrm f}}
\renewcommand{\i}{\mathsf{i}}

\begin{document}
	
	\maketitle\thispagestyle{empty}
	\begin{abstract}
	\noindent
	We show the existence of ground states in the massless spin boson model without any  infrared regularization. 
Our proof is non-perturbative and relies on a compactness argument. It   works for arbitrary values of the coupling constant under the hypothesis that the second derivative of the ground state energy as a function
of a constant  external magnetic field is  bounded.
	\end{abstract}

\section{Introduction}

The spin boson model describes a quantum mechanical two-level system which is linearly coupled to a  quantized field of bosons.
If the  bosons are relativistic and massless, the model is used as a simplified caricature describing an atom, coarsely approximated by two states,  coupled to 
the quantized electromagnetic field. Although the  model has been extensively investigated, see for example \cite{Spohn.1998,AraiHirokawa.1997,Gerard.2000} and references therein, it is still  an active area of research, cf. \cite{BachBallesterosKoenenbergMenrath.2017,DamMoller.2018b}.

If this system has a ground state, i.e., if the infimum of the spectrum is an eigenvalue, this physically means that  it exhibits binding. Furthermore, ground states are a necessary ingredient to study  scattering theory in quantum field theories. 
In the case of massless bosons or  photons in $\R^d$, we have the dispersion relation $\omega(k)=|k|$.
As a consequence,   the infimum of the spectrum is not  isolated from the rest of the spectrum and establishing existence of a ground  state is non-trivial.
If one imposes a mild infrared regularization of 
the interaction  function $f$, such that the quotient $f/\omega$ is square-integrable, 
e.g., in the case $d=3$  if we have $\delta>-1/2$ such that $f(k) \sim |k|^{\delta }$   for small photon momentum $|k|$, then  existence of ground states has been shown \cite{Spohn.1989,Spohn.1998,AraiHirokawa.1995,AraiHirokawa.1997,BachFroehlichSigal.1998a,BachFroehlichSigal.1998b,Gerard.2000,LorincziMinlosSpohn.2002} 
and its analytic dependence on coupling parameters has been established \cite{GriesemerHasler.2009}.
However, in   models
of physical interest where $d=3$, the coupling function typically has  the behavior $f(k) \sim |k|^{-1/2}$ and $f / \omega$ is
no longer square-integrable. In such a situation the model is infrared-critical in the sense that 
 an infrared problem may  occur and 
a ground state ceases to exist. Such a behavior was most prominently observed  for  translation invariant models in \cite{Frohlich.1973,HaslerHerbst.2008a,DamHinrichs.2021}, see also  references therein. 
Moreover, the absence of ground states was shown for the Nelson model \cite{LorincziMinlosSpohn.2002}
as well as for generalized spin boson models  \cite{AraiHirokawaHiroshima.1999}, provided a nonvanishing expectation condition is 
satisfied.  However, it may also happen in  the infrared-critical case   that the infrared divergences cancel  and a ground state exists.
Heuristically, the reason behind this cancellation is  an underlying   symmetry of the model.
In particular, existence of ground states have been shown for models 
of non-relativistic quantum electrodynamics   \cite{BachFrohlichSigal.1999,GriesemerLiebLoss.2001,HaslerHerbst.2011a,HaslerHerbst.2011b,BachChenFrohlichSigal.2007,HaslerSiebert.2020}.  
 Due to the absence of diagonal entries in the coupling matrix, Herbst and the first author \cite{HaslerHerbst.2010} proved that the spin boson model does actually exhibit a ground state even in the infrared-critical case, see also  \cite{BachBallesterosKoenenbergMenrath.2017} for a recent   alternative proof  providing new insight. 

In this paper, we consider couplings which are more singular  than in  \cite{HaslerHerbst.2010,BachBallesterosKoenenbergMenrath.2017}
and prove the existence of a  ground state in  the spin boson model,  e.g.,   in $d=3$ for any coupling   $f(k)\sim |k|^{\delta}$ for $|k| \to 0$  with $\delta > -1$, provided 
an energy bound  is satisfied. We note, this result is  optimal in the sense that for $\delta = - 1$ the field operator is no longer  bounded in terms of the free field energy. 
In contrast to 
previous results,  our result is non-perturbative and holds for all values of the coupling constant 
as long as the energy inequality holds.
Let us be more precise on the statement.
Denote by $\omega\:\IR^d\to[0,\infty)$ the boson dispersion relation and $f\:\IR^d\to\IR$ the interaction of the quantum field and the two-level system. Then,
 the lower-bounded and self-adjoint Hamilton operator describing the spin boson model acts on the Hilbert space $\IC^2\otimes \FS$, with $\FS$ being the usual Fock space on $L^2(\IR^d)$, and is given as
\begin{equation}\label{def:intro}
H(\omega,f) = \sigma_z \otimes \Id + \Id \otimes \dG(\omega) + \int_{\IR^d} f(k)\sigma_x \otimes (a_k^\dag + a_k)\d k.
\end{equation}
Here, $\dG(\omega)$ denotes the second quantization of the operator of multiplication by $\omega$, moreover $a_k$, $a_k^\dag$ are the distributions describing  annihilation and creation operators,  respectively, and  
$\sigma_x$ and  $\sigma_z$ denote the  Pauli matrices.
A more rigorous definition can be found in \cref{sec:model}.
For the energy inequality we consider $e_n(\mu)=\inf \sigma ( H(\omega_n,f) + \mu(\sigma _x\otimes\Id))$, where the sequence $(\omega_n)_{n\in\IN}$ converges to $\omega$ uniformly and is chosen, such that $f/\omega_n$ is square-integrable. The parameter $\mu\in\IR$ can hereby by interpreted as an external magnetic field.
Explicitly, we assume that the second derivative of $e_n(\mu)$ exists at $\mu=0$ and is bounded as $n\to\infty$, for our result to hold.
This assumption is related to a bound on the magnetic susceptibility of a continuous Ising model. We note that such 
a bound has been shown in our situation for the case $d=3$ and $\delta =-1/2$ \cite{Spohn.1989}. For $d=3$ and  $\delta \in (-1/2,-1)$,  the bound has been shown to hold for the discrete Ising model, cf. \cite{Dyson.1969} and references therein. We believe that a proof of our assumption can be obtained by taking a continuum limit of the discrete Ising model.
As a consequence of our result, non-existence of a ground state for large coupling would imply the divergence of the magnetic susceptibility in the corresponding 
Ising model.

For the proof of our result, we utilize that the 
existence of ground states for the infrared regular situation has been established using a variety of techniques. Hence, if we consider $H(\omega_n,f)$ as above, then a ground state $\psi_n$ exists.
We then prove these ground states lie in a compact set and hence there exists a strongly convergent subsequence $(\psi_{n_k})_{k\in\IN}$. The limit of this sequence will be the ground state of $H(\omega,f)$.

\section{Model and Statement of Results}\label{sec:model}

Throughout this paper we assume $d\in\IN$ and write $\hs=L^2(\IR^d)$ for the state space of a single boson.
Then, let $\cF$ be the bosonic Fock space defined by
\begin{equation}\label{defn:Fockspace}
\cF=\IC\oplus\bigoplus_{n=1}^\infty \FS^{(n)} \qquad\mbox{with}\ \FS^{(n)}= L^2_{\mathrm{sym}}(\IR^{nd}),
\end{equation} 
where we symmetrize over the $n$ $\IR^d$-variables in each component. We write an element $\psi\in \cF$ as $\psi=(\psi^{(n)})_{n \in \IN_0}$ and define the vacuum $\Omega=(1,0,0,\dots)$. 

For a measurable function $\omega:\IR^d\to\IR$,
we define
\begin{align}
\dG(\omega)& = 0  \oplus\bigoplus_{n=1}^{\infty} \omega^{(n)}\qquad\mbox{with}\  \omega^{(n)}(k_1,\ldots,k_n) = \sum_{i=1}^{n}\omega(k_i)
\end{align}
as operators on $\FS$.
Further, for $f\in \hs$, we define the annihilation operator $a(f)$ and creation operator $a^{\dagger}(f)$ using $a(f)\Omega=0$, $a^\dagger(f)\Omega=f$ and for $g\in\FS^{(n)}$
\begin{align}
(a(f)g) (k_1,\ldots,k_{n-1})&={\sqrt{n}}\int \overline{f(k)}g(k,k_1,\ldots,k_{n-1}) \d k\in\FS^{(n-1)},\\
(a^\dagger(f)g)(k_1,\ldots,k_n,k_{n+1})&=\frac{1}{\sqrt{n+1}}\sum_{i=1}^{n+1}f(k_i)g(k_1,\ldots,\widehat{k}_i,\ldots,k_{n+1})\in\FS^{(n+1)},
\end{align}
where $\widehat{k}_i$ means that $k_i$ is omitted from the argument. One can show that these operators can be extended to closed operators on $\cF$ that satisfy $(a(f))^*=a^{\dagger}(f)$. From the creation and annihilation operator, we define the field operator
\begin{equation}\label{defn:fieldoperator}
\varphi(f)=\overline{ a(f)+a^\dagger(f) }.
\end{equation}
The following properties are well-known and can for example be found in \cite{ReedSimon.1975,Arai.2018}.
\begin{lemma}\label{lem:standard}
	Let $\omega,\omega'\:\IR^d\to\IR$
	 and $f\in \hs$. Then
	\begin{enumerate}[(i)]
		\item $\dG(\omega)$ and $\ph(f)$ are self-adjoint.
		\item If $\omega'\ge\omega$, then $\dG(\omega')\ge \dG(\omega)$. Especially, if $\omega\ge 0$, then $\dG(\omega)\ge 0$.
		\item Assume $\omega>0$ almost everywhere and $\omega^{-\frac{1}{2}} f\in \hs$. Then $\varphi(f)$ and $a(f)$ are $\dG(\omega)^{1/2}$-bounded and for $\psi\in\cD(\dG(\omega)^{1/2})$ we have
		\begin{align*}
		\lVert a(f) \psi \lVert&\leq  \lVert \omega^{-\frac{1}{2}}f \lVert  \lVert \dG(\omega)^{\frac{1}{2}}\psi \lVert \quad\mbox{and}\\
		\lVert \varphi(f) \psi \lVert&\leq 2 \lVert (\omega^{-\frac{1}{2}}+1)g \lVert  \lVert (\dG(\omega)+1)^{\frac{1}{2}}\psi \lVert.
		\end{align*}
		In particular, $\varphi(g)$ is infinitesimally $\dG(\omega)$-bounded.
	\end{enumerate}
\end{lemma}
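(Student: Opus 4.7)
The plan is to verify the three claims by exploiting the direct-sum decomposition of $\FS$ over particle sectors; on $\FS^{(n)}$ the second quantization of a multiplication operator is again a multiplication operator, which makes two of the three statements essentially routine.

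For (i), I would first observe that on each sector $\dG(\omega)$ equals multiplication by the real measurable function $\omega^{(n)}(k_1,\ldots,k_n)=\sum_{i=1}^{n}\omega(k_i)$, hence is self-adjoint on its natural maximal domain; self-adjointness of the direct sum is then standard. For $\ph(f)$ I would restrict to the finite-particle subspace $\Ff$, where the identity $(a(f))^{*}=a^{\dag}(f)$ gives symmetry. To upgrade to essential self-adjointness I would invoke Nelson's analytic vector theorem: a sector-wise computation yields $\|\ph(f)\psi^{(n)}\|\leq 2\|f\|\sqrt{n+1}\,\|\psi^{(n)}\|$, and iterating gives $\|\ph(f)^k\psi^{(n)}\|\leq (2\|f\|)^k\sqrt{(n+k)!/n!}\,\|\psi^{(n)}\|$. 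Since $(n+k)!/n!\leq (n+k)^k$, Stirling's formula makes $\sum_k t^k\|\ph(f)^k\psi\|/k!$ converge for all $t>0$ and every $\psi\in\Ff$, so $\Ff$ consists of analytic vectors and $\ph(f)\upharpoonright\Ff$ is essentially self-adjoint.

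Statement (ii) is immediate: on each sector $\omega'^{(n)}-\omega^{(n)}=\sum_i(\omega'(k_i)-\omega(k_i))$ is pointwise nonnegative whenever $\omega'\geq\omega$, hence a positive multiplication operator, and the quadratic-form inequality lifts to the direct sum.

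The substantive computation is (iii). For $\psi^{(n)}\in\FS^{(n)}$, the plan is to split $f(k)=\omega(k)^{-1/2}f(k)\cdot\omega(k)^{1/2}$ inside the integral defining $a(f)\psi^{(n)}$ and apply Cauchy--Schwarz in $k$, obtaining pointwise
\[
|(a(f)\psi^{(n)})(k_1,\ldots,k_{n-1})|^2 \leq n\,\|\omega^{-1/2}f\|^2 \int \omega(k)|\psi^{(n)}(k,k_1,\ldots,k_{n-1})|^2 \d k.
\]
Integrating in the remaining variables and using symmetry of $\psi^{(n)}$ to rewrite $n\int\omega(k_1)|\psi^{(n)}|^2\d^n k = \int\sum_i\omega(k_i)|\psi^{(n)}|^2\d^n k = \|\dG(\omega)^{1/2}\psi^{(n)}\|^2$, then summing over $n$, produces the bound on $a(f)$. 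The bound on $\ph(f)$ follows by combining this with $\|a^{\dag}(f)\psi\|^2=\|a(f)\psi\|^2+\|f\|^2\|\psi\|^2$ (from the canonical commutation relation $[a(f),a^{\dag}(f)]=\|f\|^2$) and absorbing the extra $\|f\|\|\psi\|$ term into $\|(\omega^{-1/2}+1)f\|\cdot\|(\dG(\omega)+1)^{1/2}\psi\|$. Infinitesimal $\dG(\omega)$-boundedness of $\ph(f)$ then drops out of the $\dG(\omega)^{1/2}$-bound via the elementary estimate $\|\dG(\omega)^{1/2}\psi\|\leq\eps\|\dG(\omega)\psi\|+(4\eps)^{-1}\|\psi\|$. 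The main obstacle throughout is the analytic-vector estimate underlying essential self-adjointness of $\ph(f)$ in (i); the remaining parts reduce to direct integral manipulations.
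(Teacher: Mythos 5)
The paper gives no proof of this lemma --- it simply records the statements as well known and cites \cite{ReedSimon.1975,Arai.2018} --- and your argument is precisely the standard one found in those references (Nelson's analytic vector theorem on $\Ff$ for self-adjointness of $\ph(f)$, sector-wise positivity for the monotonicity of $\dG$, and the Cauchy--Schwarz splitting $f=\omega^{-1/2}f\cdot\omega^{1/2}$ together with $\|a^{\dag}(f)\psi\|^{2}=\|a(f)\psi\|^{2}+\|f\|^{2}\|\psi\|^{2}$ for the relative bounds), so it is correct and consistent with what the paper relies on. One small point: to land exactly on the stated constant $2$ you should bound $\|a^{\dag}(f)\psi\|^{2}\le\|(\omega^{-1/2}+1)f\|^{2}\,\|(\dG(\omega)+1)^{1/2}\psi\|^{2}$ before taking square roots, rather than first splitting $\|a^{\dag}(f)\psi\|\le\|a(f)\psi\|+\|f\|\,\|\psi\|$, which would only give the constant $3$.
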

\noindent
Now, let
\begin{equation}\label{eq:space}
\HS=\IC^2\otimes \FS \cong \FS\oplus\FS,
\end{equation}
where the unitary equivalence is implemented by $(v_1,v_2)\otimes \psi \mapsto v_1\psi \oplus v_2\psi$. Furthermore, let $\sigma_x$ and $\sigma_z$ be the usual $2\times2$ Pauli-matrices
\begin{equation}
\sigma_x = \begin{pmatrix}0 & 1 \\ 1 & 0\end{pmatrix} %,\quad \sigma_y = \begin{pmatrix}0 & -\i \\ \i & 0\end{pmatrix} 
\quad\mbox{and}\quad \sigma_z = \begin{pmatrix}1 & 0 \\ 0 & -1\end{pmatrix}.
\end{equation}
For a measurable function $\omega:\IR^d\to\IR$ and  $f\in\hs$, we define the spin boson Hamiltonian
on $\HS$ as
\begin{equation}
H(\omega,f) = \sigma_z\otimes\Id + \Id\otimes \dG(\omega) + \sigma_x\otimes \ph(f) . % + \mu \sigma_x \otimes \Id.
\end{equation}
\begin{lemma}\label{lem:self-adjoint}
	Assume that $\omega:\IR^d\to\IR$   is measurable and  almost everywhere positive, $f\in\hs$,  $\omega^{-1/2}f\in\hs$,  and $\mu\in\IR$.
	Then $H(\omega,f) + \mu \sigma_x \otimes \Id$ defines a lower-bounded self-adjoint operator on the domain $\cD(\Id\otimes\dG(\omega))$. Any core for $\Id\otimes \dG(\omega)$ is a core for $H(\omega,f) + \mu \sigma_x \otimes \Id$.
\end{lemma}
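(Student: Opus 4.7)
The plan is to apply the Kato--Rellich theorem, using the free operator $\Id\otimes\dG(\omega)$ as the unperturbed part and everything else as a perturbation. By \cref{lem:standard}(i) together with general facts about tensor products of self-adjoint operators, $\Id\otimes\dG(\omega)$ is self-adjoint on $\cD(\Id\otimes\dG(\omega))$, and by \cref{lem:standard}(ii) together with $\omega>0$ a.e.\ it is non-negative, hence in particular lower-bounded.

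Next I would treat the perturbation $V = \sigma_z\otimes\Id + \sigma_x\otimes\ph(f) + \mu\sigma_x\otimes\Id$ term by term. The two Pauli-matrix terms $\sigma_z\otimes\Id$ and $\mu\sigma_x\otimes\Id$ are globally bounded on $\HS$, so they are trivially infinitesimally $\Id\otimes\dG(\omega)$-bounded. For $\sigma_x\otimes\ph(f)$, the hypothesis $\omega^{-1/2}f\in\hs$ together with \cref{lem:standard}(iii) gives that $\ph(f)$ is infinitesimally $\dG(\omega)$-bounded; since $\sigma_x$ is a bounded operator on $\IC^2$ with $\nn{\sigma_x}=1$, the tensor-product estimate
\begin{equation*}
\nn{(\sigma_x\otimes\ph(f))\Psi} \le \nn{\sigma_x}\,\nn{(\Id\otimes\ph(f))\Psi}
\end{equation*}
for $\Psi\in\cD(\Id\otimes\dG(\omega))$, combined with the relative bound for $\ph(f)$ applied in the second tensor factor, shows that $\sigma_x\otimes\ph(f)$ is infinitesimally $\Id\otimes\dG(\omega)$-bounded as well. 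Summing the three relative bounds, $V$ has relative bound zero with respect to $\Id\otimes\dG(\omega)$.

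With these estimates in hand, the Kato--Rellich theorem immediately gives that $H(\omega,f)+\mu\sigma_x\otimes\Id = \Id\otimes\dG(\omega) + V$ is self-adjoint on $\cD(\Id\otimes\dG(\omega))$ and lower-bounded. The core statement follows from the standard addendum to Kato--Rellich: if $D\subset\cD(\Id\otimes\dG(\omega))$ is a core for $\Id\otimes\dG(\omega)$, then for $c$ below the spectrum of the perturbed operator the graph norm of $\Id\otimes\dG(\omega)$ is equivalent to that of $H(\omega,f)+\mu\sigma_x\otimes\Id - c$ on $\cD(\Id\otimes\dG(\omega))$, so density in one graph norm implies density in the other.

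There is no real obstacle here; the only point to be slightly careful about is passing the infinitesimal bound from $\dG(\omega)$ on $\FS$ to $\Id\otimes\dG(\omega)$ on $\IC^2\otimes\FS$ through the identification in \eqref{eq:space}, but this is routine since $\sigma_x$ acts boundedly on the two-level factor and commutes (via tensor structure) with $\Id\otimes\dG(\omega)$.
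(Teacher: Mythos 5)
Your proof is correct and follows essentially the same route as the paper: both arguments rest on the Kato--Rellich theorem with the field term $\sigma_x\otimes\ph(f)$ controlled as an infinitesimal perturbation via \cref{lem:standard}. The only cosmetic difference is that the paper first absorbs the bounded terms $\sigma_z\otimes\Id$ and $\mu\sigma_x\otimes\Id$ into the unperturbed operator before perturbing, whereas you keep them in the perturbation; since bounded self-adjoint summands change neither the domain nor the relevant graph norms, the two versions are interchangeable.
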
  
\begin{proof}
 The operator $K=\sigma_z\otimes\Id+\Id\otimes \dG(\omega)+\mu\sigma_x\otimes\Id$ is self-adjoint 
	as a sum of a  self-adjoint operator with bounded self-adjoint operators and has domain $\cD(K)=\cD(1\otimes\dG(\omega))$.  Moreover, it is bounded from below since $\dG(\omega)$ is non-negative by \cref{lem:standard} and $\sigma_x,\sigma_z$ are bounded. The symmetric operator $\sigma_x\otimes \ph(f)$ is infinitesimally $K$-bounded by \cref{lem:standard}. Hence, the statement follows from the Kato-Rellich theorem (cf. \cite[Theorem X.12]{ReedSimon.1975}).
\end{proof}
\noindent
From now on, we fix a measurable non-negative  function $\omega:\IR^d\to\IR$ and an $f\in\hs$ and work under the following assumptions.
\begin{hyp}\label{hyp1and2}\label{hyp1}\label{hyp2}\ 
	\begin{enumerate}[(i)]
		\item \label{h1:pos} $\omega$ takes positive values almost everywhere,
		\item  \label{h:to infinity} $\omega(k)\xrightarrow{\abs k\to\infty}\infty$.
		\item \label{h:hoelder}
		There exists $\alpha_1 > 0$, such that $\omega$ is locally $\alpha_1$-Hölder continuous.
		\item \label{h:epsilon integrability} There exists $\epsilon > 0$, such that $\omega^{-1/2} f \in L^2(\IR^d)\cap L^{2+\epsilon}(\IR^d)$.
		
		\item \label{h:f hoelder}
		There exists $\alpha_2 > 0$, such that
		\[
		\sup_{|p|\le 1} \int_{\IR^d} \frac{ \abs{f(k+p) - f(k)} }{ \sqrt { \omega(k)} \abs p^{\alpha_2}} \d k < \infty.
		\]	 
		\item \label{h:f L1 integrability}
		We have
		\[
		\sup_{|p|\le 1}\int_{\IR^d} \frac{\abs {f(k)}}{\sqrt{ \omega(k) } \omega(k+p)} \d k < \infty.
		\]	 
	\end{enumerate}
\end{hyp}
\begin{ex}\label{ex:threedim} In $d=3$ dimensions  elementary estimates show that the assumptions of    \cref{hyp1} hold for the choices
\begin{equation} \label{eq:standardform} 
 \omega(k) = |k| \ , \quad f(k) = g \kappa(k) |k|^{\delta}
\end{equation} 
  for any $\delta > -1$, $g \in \R$,  and a cutoff function $\kappa$ of the form  $\kappa(k) = 1_{|k| \leq \Lambda}$, for some  $\Lambda > 0$,  or $\kappa(k) = \exp(-ck^2)$, for some $c > 0$.
The number $g$ will be referred to as the coupling parameter.  
\end{ex} 
\begin{ex} More generally as in \cref{ex:threedim}, we
 consider, 	for  $ k \in \IR^d$, the functions 
	$		\omega(k) = |k|^\alpha,\  f(k)=g\kappa(k)|k|^\beta,
	$ with some $\alpha > 0$, $\beta \in \IR$, $g \in \R$, and  $\kappa$  a cutoff function as in \cref{ex:threedim}. 
	Then \cref{hyp1} holds under the condition $d>\max\{\alpha-2\beta,\frac32\alpha-\beta,\frac12\alpha-2\beta\}$.
\end{ex}
The second assumption we need is a differentiability condition for the infimum of the spectrum with  respect to the constant magnetic field $\mu$. For this, we need to ensure that $\omega$ can be approximated by a family of  functions which are bounded  from below  by some positive constant.
\begin{hyp}\label{hyp3} There exists a decreasing sequence $(\omega_n)_{n\in\IN} $ of nonnegative measurable functions $\omega_n:\R^d\to\IR$ converging uniformly to $\omega$, with the following properties.
	\begin{enumerate}[(i)]
		\item\label{part:omegamprop}\label{eq:Hoelderm} There exists  $\alpha_1 > 0$, such that $\omega_n$ is locally $\alpha_1$-Hölder continuous for all $n\in\IN$.
		\item \label{part:pos}   $\inf_{k \in \IR^d} \omega_n(k) > 0$.
		\item\label{part:diff} \label{eq:hypb0} 
		The function    $e_n(\mu) = \inf\sigma(H(\omega_n,f ) +  \mu \sigma_x  \otimes \Id)$ is  twice differentiable 
		at zero   and  
		\begin{equation} \label{eq:energyderivative} 
		C_\chi := \sup\limits_{n \in \N}(-e_{n}''(0))  <\infty  . 
		\end{equation} 
	\end{enumerate}
\end{hyp}
\begin{remark}
We note that  \cref{part:omegamprop} and \cref{part:pos}  of \cref{hyp3}   are satisfied for the typical choice of a massive photon dispersion relation
  \begin{equation} \label{posdisp}
\omega_n=\sqrt{m_n^2+\omega^2}, 
\end{equation}
or also $\omega_n=\omega+ m_n$, where $(m_n)_{n \in \N}$
  is any sequence of positive numbers decreasing  monotonically to zero.  The constant $m_n$ can be understood to be a photon mass. The result we prove is, however, independent of the specific choice of $\omega_n$.
\end{remark}
\begin{remark}  
	The differentiability  assumption in \cref{hyp3} \cref{part:diff} can be shown to hold by regular analytic perturbation theory
provided   \cref{part:pos} holds, since  \cref{part:pos} implies that the ground state energy is separated from the rest of the spectrum (cf. \cite{AraiHirokawa.1995,AraiHirokawa.1997} or \cref{prop:ground-state}).
However, the uniform bound on the second 
derivative 	is nontrivial to establish. 
	We note that this assumption will be translated into a bound on the resolvent by means of second 
	order perturbation theory, see \cref{lem:1}. 
	In fact, this bound on the resolvent is what  we  need in the proof of the main result, i.e.,   Theorem \ref{maintheorem} holds if one replaces  \cref{part:diff} by 
	the bound in  Lemma \ref{lem:1}.
\end{remark} 
\begin{remark} Let us discuss  \cref{eq:energyderivative}  in  $d=3$ dimensions for the case given in \cref{eq:standardform,posdisp}. 
If $\delta > -1/2$ then   \cref{eq:energyderivative} follows, e.g., from the estimates in  \cite{GriesemerHasler.2009}. In the  case $\delta  = - 1/2$, 
$\cref{eq:standardform}$ has been shown for small values of the coupling constant  in  \cite{Spohn.1989}, using 
that ground state properties of the spin boson Hamiltonian are related to the correlation functions of a continuous  one dimensional 
Ising model with long range interaction, cf. \cite{SpohnDuemcke.1985} and \cite{AizenmanNewman.1986}. 
In particular,  \eqref{eq:energyderivative} translates  to the corresponding Ising model having  finite magnetic suszeptibility. For $\delta \in (-1/2,-1)$ the finiteness
of the magnetic suszeptibility has been shown for the discrete Ising model, \cite{Dyson.1969}.
In fact, for  $\delta \in  [ -1/2,-1)$  the  bound  \cref{eq:energyderivative} does not   
hold  anymore for large values of the coupling constant. This follows from the relation to the Ising model and the  phase transition for  Ising 
models with  coupling decaying quadratically in the distance, cf. \cite{Spohn.1989,AizenmanNewman.1986,ImbrieNewman.1988}.
On the other hand, for small couplings,  using the relation to a continuous   Ising model, it has been shown in \cite{Abdessalam.2011} that $g \mapsto \inf \sigma(H(|\cdot| , g f  ))$ is analytic in a neighborhood 
of zero if $ f , f |\cdot |^{-1/2} \in \hh $, which corresponds to $\delta >  -1$. Note that the infimum of the spectrum can be analytic although there does not 
exist a ground state (cf. \cite{AbdessalamHasler.2012}).
Thus, it is not unreasonable to  suspect that  \eqref{eq:energyderivative}  might in fact hold for $\delta > - 1$ provided the coupling is sufficiently small.
\end{remark}
Our main result now is the following.
\begin{theorem}\label{maintheorem}
	Assume \cref{hyp1,,hyp3} hold. %\\
	Then $\inf\sigma(H(\omega,f))$ is an eigenvalue of $H(\omega,f)$.
\end{theorem}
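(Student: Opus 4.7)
I would follow the compactness route indicated in the introduction. Under \cref{hyp3}\cref{part:pos}, each $H_n := H(\omega_n,f)$ has a spectral gap above $E_n := \inf\sigma(H_n)$, so the existing existence results (e.g.\ \cite{AraiHirokawa.1997,Gerard.2000}) produce a normalized ground state $\psi_n$. The plan is to extract a subsequence $\psi_{n_k}\to\psi$ strongly with $\nn\psi=1$; together with strong resolvent convergence $H_n\to H(\omega,f)$ (which follows from uniform convergence $\omega_n\to\omega$ on the common core of finite-particle smooth vectors, since $\dG(\omega_n-\omega)$ is bounded in terms of $\Nf$) and monotonicity $E_n\downarrow\inf\sigma(H(\omega,f))$, this will identify $\psi$ as the ground state of $H(\omega,f)$.

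\textbf{From susceptibility to a resolvent bound.} The key translation of \cref{hyp3}\cref{part:diff} proceeds as follows. The Hamiltonian $H_n$ commutes with the parity $U := \sigma_z\otimes(-1)^{\Nf}$, whereas $\sigma_x\otimes\Id$ anti-commutes with $U$. Hence $\psi_n$, chosen in a $U$-eigenspace, is orthogonal to $(\sigma_x\otimes\Id)\psi_n$, in particular $e_n'(0)=0$. Second order perturbation theory applied to $\mu\mapsto H_n + \mu\,\sigma_x\otimes\Id$ at $\mu=0$ then yields
\[
\sc{(\sigma_x\otimes\Id)\psi_n,\,(H_n - E_n)^{-1}(\sigma_x\otimes\Id)\psi_n} \;=\; -\tfrac12 e_n''(0)\;\le\; \tfrac12 C_\chi,
\]
where the resolvent is interpreted on $\psi_n^\perp$ and is well-defined thanks to the spectral gap.

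\textbf{Compactness of $(\psi_n)$.} The pull-through formula
\[
a(k)\psi_n \;=\; -\bigl(H_n - E_n + \omega_n(k)\bigr)^{-1} f(k)\,(\sigma_x\otimes\Id)\psi_n,
\]
combined with the operator inequality $(H_n - E_n + \omega_n(k))^{-2}\le \tfrac14\,\omega_n(k)^{-1}(H_n - E_n)^{-1}$ on $\psi_n^\perp$ (from $(a+b)^2\ge 4ab$ and the spectral theorem) and the previous bound, gives the uniform estimate
\[
\sup_n\sc{\psi_n,\Nf\psi_n} \;=\; \sup_n\int_{\IR^d}\nn{a(k)\psi_n}^2\,\d k \;\le\; \frac{C_\chi}{8}\int_{\IR^d}\frac{\abs{f(k)}^2}{\omega(k)}\,\d k \;<\;\infty,
\]
where I used $\omega_n\ge\omega$ and \cref{hyp1}\cref{h:epsilon integrability}. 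To upgrade this to relative compactness in $\HS$, I would establish three further $n$-uniform bounds: (i) $\sc{\psi_n,\Id\otimes\dG(\omega_n)\psi_n}\le C$, from the identity $E_n=\sc{\psi_n,H_n\psi_n}$ and \cref{lem:standard}; (ii) a Hölder estimate $\int\nn{a(k+p)\psi_n-a(k)\psi_n}^2\,\d k\le C\abs p^{2\alpha}$ for $\abs p\le 1$ and some $\alpha>0$, obtained by writing the difference via the first resolvent identity and splitting into contributions controlled respectively by $f(k+p)-f(k)$ and $\omega_n(k+p)-\omega_n(k)$, which are handled using \cref{hyp1}\cref{h:f hoelder,h:f L1 integrability} and \cref{hyp3}\cref{part:omegamprop}; (iii) photon-number tail control via iterated pull-through to bound $\sc{\psi_n,\Nf^2\psi_n}$, using \cref{hyp1}\cref{h:epsilon integrability} for integrability of the resulting two-point functions. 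Together, (i) provides momentum localization through \cref{hyp1}\cref{h:to infinity}, (ii) yields fibrewise spatial compactness in each $\FS^{(m)}$ via Fréchet--Kolmogorov, and (iii) controls the high-photon tail. A diagonal argument then produces a strongly convergent subsequence $\psi_{n_k}\to\psi$.

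\textbf{Passing to the limit and main obstacle.} Once $\psi_{n_k}\to\psi$ strongly, $\nn\psi=1$ is automatic. The bound (i) together with weak lower semi-continuity places $\psi\in\cD(\Id\otimes\dG(\omega)^{1/2})$, and the strong resolvent convergence above combined with $E_n\to E := \inf\sigma(H(\omega,f))$ passes the eigenvalue equation into $H(\omega,f)\psi = E\psi$. The main obstacle will be step (ii): the first resolvent identity introduces a factor $R_n(k+p)\bigl(\omega_n(k)-\omega_n(k+p)\bigr)R_n(k)$ whose operator norm is only of order $\omega_n(k)^{-1}$, so integrating against $\abs{f(k)}^2$ in the infrared-critical regime $\delta>-1$ demands that one applies the improved resolvent bound from the susceptibility hypothesis in both resolvent factors and couples it to the Hölder regularity of $f$ and $\omega_n$ through \cref{hyp1}\cref{h:f hoelder,h:f L1 integrability}. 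This is precisely the point where the proof genuinely uses the strength of \cref{hyp3}\cref{part:diff}, since a naive bound $\nn{a(k)\psi_n}\le\abs{f(k)}/\omega_n(k)$ already fails to be $L^2$-integrable in our regime.
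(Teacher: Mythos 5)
Your architecture coincides with the paper's: massive ground states $\psi_n$, the symmetry argument giving $e_n'(0)=0$, second-order perturbation theory turning \cref{hyp3}\cref{part:diff} into the resolvent bound $\sc{\sigma_x\psi_n,(H_n-E_n)^{-1}\sigma_x\psi_n}=-\tfrac12 e_n''(0)$, the pull-through formula, uniform $N$- and $\Hf$-bounds, a momentum-translation estimate, and extraction of a strongly convergent subsequence. Two deviations are harmless or simplifiable. For the identification of the limit, the paper avoids strong resolvent convergence entirely: by monotonicity $H\le H_n$ one has $0\le\sc{\psi_n,(H-E)\psi_n}\le E_n-E\to0$, and lower semicontinuity of the nonnegative form $H-E$ under strong convergence finishes the proof (\cref{prop:minimizingsequ}); your route via resolvent convergence also works but requires an extra (standard) argument. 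Your step (iii) is unnecessary: the first-moment bound $\sup_n\sc{\psi_n,N\psi_n}<\infty$ already controls the photon-number tail by Chebyshev (in the paper this is built into the form $q_\delta$ through $T_\ell\ge\ell$ in \cref{lem:compact}), whereas a uniform bound on $\sc{\psi_n,N^2\psi_n}$ by iterated pull-through is not obviously obtainable from the stated hypotheses.

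The genuine gap is in your step (ii), which you flag as the main obstacle but whose proposed resolution does not close it. The resolvent-identity decomposition yields (cf.\ \cref{coro:photon bounds}\cref{photon bounds 2}) only $\nn{a_{k+p}\psi_n-a_k\psi_n}\le\abs p^\alpha h(p,k)$ with $\sup_{|p|\le1}\nn{h(p,\cdot)}_{1}<\infty$, because \cref{hyp2}\cref{h:f hoelder,h:f L1 integrability} are $L^1$-type conditions; they do not provide the $L^2$-in-$k$ estimate $\int\nn{a_{k+p}\psi_n-a_k\psi_n}^2\d k\le C\abs p^{2\alpha}$ that Fréchet--Kolmogorov requires. ``Applying the improved resolvent bound in both resolvent factors'' cannot repair this, since in the term $f(k+p)\Rsl(k)(\omega_n(k+p)-\omega_n(k))\Rsl(k+p)\sigma_x\psi_n$ only one resolvent acts directly on $\sigma_x\psi_n$, the other contributing merely its operator norm $\omega_n(k)^{-1}$. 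The missing idea (the paper's \cref{lem:fouriertransformbound}) is an interpolation between the Hölder bound and the pointwise bound of \cref{coro:photon bounds}\cref{photon bounds 1}: for $\theta\in(0,\tfrac12)$,
\[
\nn{a_{k+p}\psi_n-a_k\psi_n}^2\le C\,\abs p^{2\theta\alpha}\,h(p,k)^{2\theta}\left(\frac{\abs{f(k)}}{\sqrt{\omega(k)}}\right)^{2(1-\theta)},
\]
followed by Young's inequality with exponent $r=1/(2\theta)$ so that the $h$-factor lands in $L^1$, and the choice of $\theta$ making the exponent of the second factor equal to $2+\epsilon$. It is precisely here that \cref{hyp2}\cref{h:epsilon integrability} --- which your step (ii) never invokes --- enters, at the cost of lowering the Hölder exponent to $\epsilon\alpha/(1+\epsilon)$. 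With this interpolation inserted, the remainder of your compactness argument goes through.
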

\begin{remark}
	This result has been proven for infrared regular models, e.g., under the additional assumption $\inf_{k \in \IR^d} \omega(k) >0$ in \cite{AraiHirokawa.1995} (see \cite{AraiHirokawa.1997} for a generalization of this result) and under the assumption $\omega^{-1}f\in\hs$ in \cite{Gerard.2000}, which  in $d=3$ corresponds to $\delta > -1/2$ in  \cref{eq:standardform}.
Specifically for  $d=3$ the existence has been  shown in situations where  $\omega^{-1}f\notin\hs$ in \cite{HaslerHerbst.2010,BachBallesterosKoenenbergMenrath.2017}.
 The results in these papers include the case \eqref{eq:standardform} 
with $\delta = -1/2$ provided the coupling $g$ is sufficiently  small. The results are perturbative in nature and were obtained using operator theoretic renormalization
and iterated perturbation theory, respectively.  In particular,  \cite{HaslerHerbst.2010} not only shows existence, but also analyticity of the ground state and the 
ground state energy in the coupling constant. 
  Concerning existence,  the result  of \cref{maintheorem} goes beyond. It shows 
existence for any $\delta > - 1$ and arbitrary coupling, as long as the derivative bound  \cref{eq:energyderivative} is  finite. 
\end{remark}
\begin{remark} In \cite{Spohn.1989}  finite temperature KMS states of the spin boson Hamiltonian where investigated for $\int_{\IR^d}  f(k)^2 e^{-\omega(k) |t|} \d k  \cong t^{-2}$  for large $t$.   For $d=3$  this corresponds to $\delta = -1/2$ in \eqref{eq:standardform}.  Using results about the 
one dimensional continuous Ising model,  
it was established that the KMS states 
have a weak  limit as the temperature drops to zero. In particular, it was shown that there 
exists a critical coupling such that the expectation of the number of bosons is 
finite below and infinite at and above the critical coupling strength.  
We note that for the proof of the main theorem we  use 
a similar bound 
on the number of photons, see \cref{coro:photon bounds} \cref{photon bounds 1} 
 (which is in fact weaker than the one in  \cite{Spohn.1989}). 
\end{remark} 
\begin{remark} We note that
our result gives   a physically  explicit bound  on the coupling constant via  \cref{eq:energyderivative}, where the 
left hand side of \cref{eq:energyderivative}  is proportional  to the magnetic 
suszeptibility  of  the corresponding  Ising model.
As a consequence of  \cref{maintheorem}   the absence of  a ground state implies that the magnetic 
suszeptibility must diverge. Given  the existence results in  \cite{HaslerHerbst.2010,BachBallesterosKoenenbergMenrath.2017}, in 
 case $\delta = -1/2$  in \cref{eq:standardform},  the absence of a ground state  for large coupling could provide an alternative method of  proof for    phase transitions 
in  continuous long range Ising models. To the best of our knowledge the absence of a ground state in the spin boson model with $\mu = 0$ 
for $  \delta  \in  (-1, - 1/2]$ and large coupling has not yet been shown. Nevertheless, we  refer the reader to results \cite{Spohn.1989,DamMoller.2018b} where the large coupling 
limit has been investigated.
\end{remark} 
The method of proof we use is based on the proof in \cite{GriesemerLiebLoss.2001}. It was applied to the infrared-critical model of non-relativistic quantum electrodynamics by two of the authors in \cite{HaslerSiebert.2020}.

For the proof of \cref{maintheorem}, we denote by $\psi_n$ the ground state of $H(\omega_n,f)$, which exists due to the assumption $\inf_{k \in \R^d} \omega_n > 0$. %, cf. \cite{Spohn.1989,AraiHirokawa.1995,AraiHirokawa.1997,Spohn.1998,BachFroehlichSigal.1998a,BachFrohlichSigal.1999,Gerard.2000,BachChenFroehlichSigal.2003,DamMoller.2018a}.
We then prove, that all of them lie in a compact set $K\subset \IC^2\otimes \FS$. Hence, there exists a subsequence  $(\psi_{n_j})_{j\in\IN}$  converging strongly  to some $\psi\in K$. It then remains to show that $\psi\ne0$ actually is a ground state of $H(\omega,f)$.

The rest of this paper is organized as follows. In \cref{sec:groundstates}, we show some simple properties of the states $\psi_n$ and the corresponding ground state energies. In \cref{sec:bounds}, 
we then derive necessary upper bounds with respect to the photon number to construct the compact set $K$ in \cref{sec:compact}.

\section{Ground State Properties for Massive Photons}\label{sec:groundstates}

In this section, we derive some simple properties of the ground state energy of the infrared regular spin boson Hamiltonian.  
Throughout this section we will assume \cref{hyp1}  and that the sequence $(\omega_n)_{n\in\IN}$ is chosen as in \cref{hyp3} \cref{part:pos}.
We set
\begin{equation}\label{def:simpleops}
	H = H(\omega,f) \quad\mbox{and}\quad H_n = H(\omega_n,f),
\end{equation}
as well as  $E = \inf \sigma(H)$ and  $E_n = \inf \sigma(H_n)$ for all $n \in \N$. 
\begin{lemma}\label{lem:energylimit} We have
	\begin{enumerate}[(i)]
		\item\label{part:operatormonotonicity} $H \le H_{n'}\le H_{n}$ for $n \leq n' $,
		\item\label{part:energylimit} $\lim\limits_{n\to\infty} E_n = E$.
	\end{enumerate}
\end{lemma}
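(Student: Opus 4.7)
For part (i), the natural approach is to use Lemma \ref{lem:standard} (ii) together with the fact that $\omega\le\omega_{n'}\le\omega_{n}$ pointwise for $n\le n'$. This gives $\dG(\omega)\le\dG(\omega_{n'})\le\dG(\omega_n)$ as quadratic forms (with nested form domains $\cD(\dG(\omega_n)^{1/2})\subset\cD(\dG(\omega_{n'})^{1/2})\subset\cD(\dG(\omega)^{1/2})$). Since the remaining terms $\sigma_z\otimes\Id$ and $\sigma_x\otimes\ph(f)$ of $H_n$ do not depend on $\omega_n$, adding them to both sides preserves the form inequality, proving (i). The inequality of ground state energies $E\le E_{n'}\le E_n$ then follows by the variational principle, so in particular $(E_n)_{n\in\IN}$ is monotonically decreasing and bounded from below by $E$; hence it converges to some $E_\infty\ge E$.

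For the matching upper bound $E_\infty\le E$, the plan is to exhibit a common core for all the operators $H,H_n$ on which the energies converge pointwise. A convenient choice is the subspace $\cS\subset\HS$ spanned by vectors of the form $v\otimes a^\dagger(f_1)\cdots a^\dagger(f_k)\Omega$ with $v\in\IC^2$ and $f_j\in C_c^\infty(\IR^d)$, together with $v\otimes\Omega$. By \cref{hyp1} \cref{h:hoelder} and \cref{hyp3} \cref{part:omegamprop}, all of $\omega$ and the $\omega_n$ are locally bounded, so $\cS$ lies in the common domain $\cD(\Id\otimes\dG(\omega))\cap\bigcap_n\cD(\Id\otimes\dG(\omega_n))\cap\cD(\Id\otimes N)$, where $N=\dG(\one)$ is the number operator. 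Standard arguments show $\cS$ is a core for $H$ (using that $\ph(f)$ is infinitesimally $\dG(\omega)$-bounded by \cref{lem:standard} (iii)), so by the min-max principle $E=\inf_{\psi\in\cS,\|\psi\|=1}\langle\psi,H\psi\rangle$.

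Given $\eps>0$, pick $\psi\in\cS$ with $\|\psi\|=1$ and $\langle\psi,H\psi\rangle<E+\eps$. The key computation is
\[
\langle\psi,H_n\psi\rangle-\langle\psi,H\psi\rangle=\langle\psi,\Id\otimes\dG(\omega_n-\omega)\psi\rangle,
\]
and since $0\le \omega_n-\omega\le\|\omega_n-\omega\|_\infty$ pointwise, the estimate $\dG(\omega_n-\omega)\le\|\omega_n-\omega\|_\infty N$ gives
\[
0\le \langle\psi,H_n\psi\rangle-\langle\psi,H\psi\rangle\le\|\omega_n-\omega\|_\infty\,\langle\psi,(\Id\otimes N)\psi\rangle.
\]
Because $\psi\in\cS$ has finite particle number and $\omega_n\to\omega$ uniformly by \cref{hyp3}, the right-hand side tends to zero. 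Hence $E_n\le\langle\psi,H_n\psi\rangle\to\langle\psi,H\psi\rangle<E+\eps$, so $E_\infty\le E+\eps$; letting $\eps\to 0$ completes (ii).

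The only non-routine point is verifying that $\cS$ really serves as a common form core on which the uniform convergence of $\omega_n$ translates into energy convergence; once the bound $\dG(\omega_n-\omega)\le\|\omega_n-\omega\|_\infty N$ is in hand, the argument becomes automatic. No deeper obstacle arises, since all the regularity needed for the Fock-space manipulations is furnished by \cref{hyp1,hyp3}.
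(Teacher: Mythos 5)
Your proposal is correct and follows essentially the same route as the paper: part (i) via monotonicity of $(\omega_n)$ and \cref{lem:standard}, and part (ii) by choosing a near-minimizer for $H$ with finite expectation of $N=\Id\otimes\dG(1)$ and exploiting the bound $\dG(\omega_n-\omega)\le\|\omega_n-\omega\|_\infty\,\dG(1)$ together with uniform convergence. The only (cosmetic) difference is that you instantiate the trial vectors concretely as finite-particle states built from $C_c^\infty$ one-particle functions, whereas the paper simply takes a normalized vector in $\cD(N)\cap\cD(H)$, which is a core for $H$ by \cref{lem:self-adjoint}.
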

\begin{proof}
	\cref{part:operatormonotonicity} follows from the monotonicity of $(\omega_n)$ and \cref{lem:standard}.
	We set $N=\Id \otimes \dG(1)$. Then, due to the uniform convergence of $(\omega_n)$, there is a sequence $(C_n)\subset \IR^+$ satisfying $C_n\xrightarrow{n\to\infty} 0$ and $\omega_n\le\omega+C_n$. Hence,
	\[ \dG(\omega_n) \le \dG(\omega) + C_n\dG(1), \qquad\mbox{which implies}\qquad H_n\le H + C_n N. \]
	On the other hand  let $\eps>0$ and fix $\varphi_\eps\in\cD(N)\cap \cD(H_0)$ with $\|\varphi_\eps\|=1$, such that
	\[ \braket{\varphi_\eps,H\varphi_\eps}\le E + \eps.\]
	This is possible,  since $\cD(N)\cap \cD(H_0)$ is a core for $\Id\otimes \dG(\omega)$ and hence for $H$, by \cref{lem:self-adjoint}. Together with \cref{part:operatormonotonicity}, we obtain
	\[ E\le E_n\le \braket{\varphi_\eps,H_n\varphi_\eps}\le \braket{\varphi_\eps,H\varphi_\eps}+ C_n \braket{\varphi_\eps,N\varphi_\eps}\le E+ \eps + C_n\braket{\varphi_\eps,N\varphi_\eps} \xrightarrow{n\to\infty} E + \eps. \]
	Now \cref{part:energylimit} follows in the limit $\eps\to 0$.
\end{proof}
\noindent
As mentioned above the bound \cref{hyp3}  \cref{part:pos}  implies the existence of a ground state, which is
the content of the following lemma. 
\begin{proposition}\label{prop:ground-state}
	For all $n\in\IN$, $E_n$ is a simple eigenvalue of $H_n$.\\ Further, $\left[E_n,E_n+\inf_{k\in\IR^d}\omega_n(k)\right)\cap\sigma_{\mathrm{ess}}(H_n)=\emptyset$.
\end{proposition}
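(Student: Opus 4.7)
Plan. Since \cref{hyp3}\cref{part:pos} ensures $m_n:=\inf_{k\in\IR^d}\omega_n(k)>0$, the operator $H_n$ is an infrared-regular spin boson Hamiltonian and both assertions reduce to well-known facts about such operators; in full generality both are contained in \cite{AraiHirokawa.1995,AraiHirokawa.1997}, which may be invoked directly. For completeness I sketch the argument: prove the gap in the essential spectrum by an HVZ-type partition-of-unity argument, and the simplicity of $E_n$ by a Perron--Frobenius argument in the $Q$-space representation of $\FS$.

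For the essential spectrum, I would take a smooth partition of unity $j_0^2+j_\infty^2=\Id$ on $\hs$ with $j_0$ of compact support, and lift it to the factorisation map $\check\Gamma(j_0,j_\infty)\colon\FS\to\FS\otimes\FS$. Combined with the relative bound in \cref{lem:standard}, one obtains an intertwining identity
\[
\check\Gamma(j_0,j_\infty)H_n=(H_n\otimes\Id+\Id\otimes\dG(\omega_n))\check\Gamma(j_0,j_\infty)+R_{j_0},
\]
where $R_{j_0}$ becomes compact and vanishes strongly on any bounded sequence as $\supp j_0$ is enlarged. Since $H_n\otimes\Id+\Id\otimes\dG(\omega_n)$ is bounded below by $E_n+m_n$ on the orthogonal complement of $(\IC^2\otimes\FS)\otimes\Omega$, evaluating the identity on a Weyl sequence for any $\lambda\in\sigma_{\mathrm{ess}}(H_n)$ forces $\lambda\geq E_n+m_n$. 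In particular $E_n\in\sigma(H_n)\setminus\sigma_{\mathrm{ess}}(H_n)$, so $E_n$ is an eigenvalue of finite multiplicity.

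For simplicity, I would pass to the $Q$-space (Wiener--It\^o) representation $\FS\cong L^2(Q,\d\mu_G)$, in which $\ph(f)$ becomes multiplication by a real-valued function. Conjugating $H_n$ by the unitary $\sigma_x\otimes\Id$ flips $\sigma_z\otimes\Id\mapsto-\sigma_z\otimes\Id$ while preserving $\sigma_x\otimes\ph(f)$; writing the result on $L^2(Q,\d\mu_G)\oplus L^2(Q,\d\mu_G)$ in the $\sigma_x$-eigenbasis yields the matrix Hamiltonian
\[
\begin{pmatrix}\dG(\omega_n)+\ph(f)&-1\\-1&\dG(\omega_n)-\ph(f)\end{pmatrix}.
\]
By Gross's hypercontractivity theorem, each diagonal semigroup $e^{-t(\dG(\omega_n)\pm\ph(f))}$ is positivity improving on the cone of non-negative functions on $Q$, while the off-diagonal entry $-1$ contributes only positive jumps in the Dyson expansion of $e^{-tH_n}$. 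Hence $e^{-tH_n}$ is positivity improving on the cone of componentwise non-negative vectors, and Perron--Frobenius yields that $E_n$ is a simple eigenvalue. The one delicate point, and the main obstacle, is tracking positivity across the two spin components; everything else is standard.
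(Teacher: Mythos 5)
Your proof takes essentially the same route as the paper's, which likewise disposes of this proposition by citation to the infrared-regular literature — existence from \cite{AraiHirokawa.1995} and simplicity for arbitrary coupling from \cite{HaslerHerbst.2010} (see also \cite{Frohlich.1973}) — and your HVZ and $Q$-space Perron--Frobenius sketches are faithful outlines of the standard arguments behind those references. The only quibble is attribution: simplicity at \emph{arbitrary} coupling strength is due to \cite{HaslerHerbst.2010} rather than to \cite{AraiHirokawa.1995,AraiHirokawa.1997} alone, though the positivity-improving argument you sketch (diagonalising $\sigma_x$, flipping the sign of the off-diagonal term, and expanding $e^{-tH_n}$ in a Dyson series with non-negative jumps) does cover that case.
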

\begin{proof}
The existence has been shown in \cite{AraiHirokawa.1995} and the uniqueness for arbitrary
couplings has been shown in \cite{HaslerHerbst.2010}, see also \cite{Frohlich.1973}.
\end{proof} 
\noindent
Let $\psi_n$ be a  normalized eigenvector of $H_n$ to the eigenvalue $E_n$.
A main ingredient of our proof then is the following \lcnamecref{prop:minimizingsequ}.
\begin{proposition}\label{prop:minimizingsequ}
	The sequence $(\psi_n)_{n\in\IN}$ is minimizing for $H$, i.e.,
	\[ 0\le \braket{\psi_{n},(H-E)\psi_{n}}\xrightarrow{n\to \infty}0. \]
\end{proposition}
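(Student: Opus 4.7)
The overall plan is to sandwich the quantity $\braket{\psi_n,H\psi_n}$ between $E$ and $E_n$, both of which converge to $E$ by \cref{lem:energylimit}\cref{part:energylimit}. Since $E=\inf\sigma(H)$ and $\|\psi_n\|=1$, the lower bound $E\le\braket{\psi_n,H\psi_n}$ is automatic once one knows that $\psi_n$ lies in the (form) domain of $H$. For the upper bound I would invoke the operator monotonicity $H\le H_n$ provided by \cref{lem:energylimit}\cref{part:operatormonotonicity}, which yields $\braket{\psi_n,H\psi_n}\le\braket{\psi_n,H_n\psi_n}=E_n$, the last equality holding because $\psi_n$ is a normalized eigenvector of $H_n$ at the ground state energy.

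The only technical point is the domain check. Because $(\omega_n)_{n\in\IN}$ is a decreasing sequence converging to $\omega$, we have $0\le\omega\le\omega_n$ pointwise. Restricted to each $m$-particle sector $\FS^{(m)}$, both $\dG(\omega)$ and $\dG(\omega_n)$ act as multiplication by the symmetric functions $\omega^{(m)}\le\omega_n^{(m)}$, so $\cD(\dG(\omega_n))\subset\cD(\dG(\omega))$. In particular, $\psi_n\in\cD(H_n)=\cD(\Id\otimes\dG(\omega_n))$ is contained in $\cD(\Id\otimes\dG(\omega))=\cD(H)$, and all inner products below can be interpreted directly as operator matrix elements. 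Combining the three observations gives
\[
0 \;\le\; \braket{\psi_n,(H-E)\psi_n} \;=\; \braket{\psi_n,H\psi_n}-E \;\le\; E_n-E,
\]
and the right-hand side tends to $0$ by \cref{lem:energylimit}\cref{part:energylimit}.

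I do not really expect a hard step here: the argument is essentially a two-line squeeze using only the already-established monotonicity and convergence of ground state energies, plus the trivial domain inclusion. In particular, it requires no control over the expected photon number in $\psi_n$ and no estimate on field operators; such control enters only in the subsequent sections, where this minimizing property is upgraded, via uniform photon-number and regularity bounds, to the compactness statement that proves \cref{maintheorem}.
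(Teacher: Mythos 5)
Your proof is correct and coincides with the paper's own argument: both squeeze $\braket{\psi_n,(H-E)\psi_n}$ between $0$ and $E_n-E$ using the operator monotonicity $H\le H_n$ from \cref{lem:energylimit}\cref{part:operatormonotonicity} and the convergence $E_n\to E$ from \cref{lem:energylimit}\cref{part:energylimit}. The domain verification you include (via $\omega\le\omega_n$ giving $\cD(\Id\otimes\dG(\omega_n))\subset\cD(\Id\otimes\dG(\omega))$) is a correct and welcome addition that the paper leaves implicit.
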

\begin{proof}
	We use \cref{lem:energylimit} and find
	\begin{equation*}
		0\le \braket{\psi_{n},(H-E)\psi_{n}} \le \braket{\psi_{n},(H_{n}-E)\psi_{n}} = E_{n}-E \to 0.
		\qedhere
	\end{equation*}
\end{proof}

\section{Infrared Bounds}\label{sec:bounds}

In this section we derive essential bounds on the ground states $\psi_n$, which are uniform in $n\in\IN$. Throughout this section we will assume that  \cref{hyp1,,hyp3} hold.
We recall the following  definition in  \cref{hyp3}
\begin{equation}
	 e_n(\mu)=\inf\sigma( H(\omega_n,f) + \mu \sigma_x  \otimes \Id )\quad\mbox{for}\ n\in\IN\ \mbox{and}\ \mu\in\IR.
\end{equation}
Note  that by the definitions in \cref{def:simpleops}, we have $E_n = e_n(0) $. 
The next   \lcnamecref{lem:sym} is a simple symmetry argument.
\begin{lemma} \label{lem:sym} We have $e_{n}(\mu) = e_{n}(-\mu)$ for all $n\in\IN$. 
\end{lemma}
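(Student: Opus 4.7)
The plan is to exhibit a unitary $V$ on $\HS=\IC^2\otimes\FS$ implementing a symmetry that conjugates $H(\omega_n,f)+\mu\,\sigma_x\otimes\Id$ into $H(\omega_n,f)-\mu\,\sigma_x\otimes\Id$. Since unitarily equivalent self-adjoint operators share the same spectrum, the equality $e_n(\mu)=e_n(-\mu)$ will follow by taking the infimum.

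The natural candidate is $V=\sigma_z\otimes\Gamma(-1)$, where $\Gamma(-1)$ is the second quantization of $-\Id_{\hs}$ on the Fock space; equivalently $\Gamma(-1)=(-1)^{\dG(1)}$, acting as $(-1)^n$ on the $n$-particle sector $\FS^{(n)}$. This is a self-adjoint unitary on $\FS$, and $V$ is a self-adjoint unitary on $\HS$. The four terms composing $H(\omega_n,f)+\mu\,\sigma_x\otimes\Id$ transform under conjugation by $V$ as follows. The boson energy $\Id\otimes\dG(\omega_n)$ is invariant because $\Gamma(-1)$ commutes with every particle-number-preserving operator. The diagonal term $\sigma_z\otimes\Id$ is invariant because $\sigma_z$ commutes with itself. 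The interaction $\sigma_x\otimes\varphi(f)$ picks up two minus signs and is also invariant, since $\sigma_z\sigma_x\sigma_z=-\sigma_x$ and $\Gamma(-1)a^{\sharp}(f)\Gamma(-1)=-a^{\sharp}(f)$ (hence $\Gamma(-1)\varphi(f)\Gamma(-1)=-\varphi(f)$). Finally, the magnetic term $\mu\,\sigma_x\otimes\Id$ flips sign because only the spin part anticommutes.

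Combining these four identities, initially on a joint core such as $\IC^2\otimes\cF_{\mathrm{fin}}$ and then extending by self-adjointness, one obtains
\begin{equation*}
V^{*}\bigl(H(\omega_n,f)+\mu\,\sigma_x\otimes\Id\bigr)V \;=\; H(\omega_n,f)-\mu\,\sigma_x\otimes\Id,
\end{equation*}
so the two operators are unitarily equivalent and thus have identical spectra. Taking infima gives $e_n(\mu)=e_n(-\mu)$.

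The only technical point worth mentioning—and the only potential obstacle—is to justify the operator identity on a domain large enough to conclude unitary equivalence of the self-adjoint closures. This is routine: by \cref{lem:self-adjoint} any core for $\Id\otimes\dG(\omega_n)$ is a core for both operators, and the finite-particle vectors (tensored with $\IC^2$) form such a core on which the algebraic manipulations above are unambiguous. Closedness then promotes the identity to the full domain.
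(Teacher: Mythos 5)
Your proposal is correct and follows essentially the same route as the paper: the paper conjugates by $U=e^{\i\frac{\pi}{2}\sigma_z}\otimes(-1)^{\dG(1)}$, which equals $\i\,\sigma_z\otimes(-1)^{\dG(1)}$ and hence implements exactly the same conjugation as your $V=\sigma_z\otimes(-1)^{\dG(1)}$. The sign bookkeeping and the conclusion via unitary equivalence of spectra are identical, with your remark on cores being a reasonable (if routine) addition.
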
 
\begin{proof}
	We define the unitary operator $U=e^{  \i \frac{ \pi }{2} \sigma_z  }\otimes(-1)^{\dG(1)}$. It easily follows from the definitions that $
	e^{  \i \frac{ \pi }{2} \sigma_z  } \sigma_x e^{-  \i \frac{ \pi }{2} \sigma_z  }  = - \sigma_x$ and $(-1)^{\dG(1)}\ph(f)(-1)^{\dG(1)}=\ph(-f)$. Now, using that $\dG$-operators commute, we obtain
	$U  ( H(\omega_n,f) + \mu \sigma_x \otimes \Id ) U^* = H(\omega_n,f) -\mu \sigma_x \otimes \Id$, which proves the claim.
\end{proof} 
\noindent
Now, for $k\in\IR^d$, we define the  pointwise  annihilation operator $a_k$ acting on $\psi^{(\ell+1)}\in\FS^{(\ell+1)}$ by 
\begin{equation}\label{def:pointwise}
	 (a_k\psi^{(\ell+1)})(k_1,\ldots,k_\ell)=\sqrt{\ell+1}\psi^{(\ell+1)}(k,k_1,\ldots,k_\ell) .
\end{equation}
Note that by the Fubini-Tonelli theorem  $ (a_k\psi^{(\ell+1)}) \in \FS^{(\ell)}$ for almost every $k$.
Further, for $n\in\IN$, we define the operator
\begin{equation}\label{eq:Rsl}
	 \Rsl(k)=(H_{n}-E_{n}+\omega_n(k))^{-1}   \quad\mbox{for}\  k \in \R^d , 
\end{equation}
which is bounded by   \cref{hyp3}, and 
the spectral theorem directly yields
\begin{equation} \label{4.4} 
	\|\Rsl(k)\|\le \frac{1}{\omega_n(k)}.
\end{equation} 
The next statement is well-known and can be found under the name pull-through formula throughout the literature, cf. \cite{BachFroehlichSigal.1998b,Gerard.2000}. 
%(for a recent discussion of pull-through formulae we refer to \cite[Theorem D.16]{DamMoller.2018a}). 
In the statement we write $\psi_n=(\psi_{n,1},\psi_{n,2})$ in the sense of \cref{eq:space} and denote
\begin{equation}
	a_k\psi_n  = (a_k\psi_{n,1},a_k\psi_{n,2}) \quad\mbox{and}\quad \sigma_x\psi_n = (\sigma_x\otimes\Id)\psi_n = (\psi_{n,2},\psi_{n,1}) .
\end{equation}
\begin{lemma}\label{lem:pullthrough}
	Let $n\in\IN$. Then, for almost every $k\in\IR^d$, the vector $a_k\psi_{n} \in\HS$  and
	\[ a_k \psi_{n} = -f(k)\Rsl(k)\sigma_x\psi_{n}.\]
\end{lemma}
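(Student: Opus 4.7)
The statement is the standard pull-through formula. Its formal origin is the canonical commutation relation $[a_k, a^\dagger(f)] = f(k)$, which on a suitable core yields
\[ [a_k, H_n] = \omega_n(k)\,a_k + f(k)\,\sigma_x\otimes\Id, \]
and hence, via $H_n\psi_n = E_n\psi_n$,
\[ (H_n - E_n + \omega_n(k))\,a_k\psi_n = -f(k)\,\sigma_x\psi_n; \]
inverting by the bounded operator $R_n(k)$ gives the claim. My plan is to make this rigorous via a smearing-and-closedness argument.

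First, I would check that $k\mapsto a_k\psi_n$ defines an element of $L^2(\IR^d,\HS)$ via Fubini applied to the Fock components of $\psi_n$: the formula \cref{def:pointwise} yields $a_k\psi_n \in \HS$ for a.e.\ $k$, and $\int\|a_k\psi_n\|^2\d k = \langle\psi_n,\Id\otimes\dG(\Id)\psi_n\rangle$ is finite because $\psi_n\in\cD(\Id\otimes\dG(\omega_n))$ and $\omega_n \geq c_n > 0$ by \cref{hyp3}\cref{part:pos}. Next, I would establish the smeared identity
\[ (H_n - E_n)\,a(g)\psi_n = -a(\omega_n g)\psi_n - \langle g,f\rangle\,\sigma_x\psi_n \]
for $g$ bounded with compact support (so that $\omega_n g\in\hs$, using local boundedness of $\omega_n$). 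This follows from the algebraic commutator $[a(g),H_n] = a(\omega_n g) + \langle g,f\rangle\,\sigma_x\otimes\Id$ on the dense core of finite-particle vectors with smooth compactly supported Fock components, extended to $\psi_n\in\cD(H_n)$ via the $\dG(\omega_n)^{1/2}$-bound in \cref{lem:standard}, combined with $H_n\psi_n = E_n\psi_n$.

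Finally I would identify the candidate $\chi_n(k) := -f(k)\,R_n(k)\,\sigma_x\psi_n$ with $a_k\psi_n$. Note $\chi_n(k)\in\cD(H_n)$ with $(H_n-E_n+\omega_n(k))\chi_n(k) = -f(k)\,\sigma_x\psi_n$, and $k\mapsto\chi_n(k)$ lies in $L^2(\IR^d,\HS)$ because $\|\chi_n(k)\|\leq |f(k)|/c_n$ by \cref{4.4}. Smearing against $\overline{g(k)}$ and pulling $H_n-E_n$ through the Bochner integral (justified by closedness together with the uniform bounds $\|R_n(k)\|\leq 1/c_n$ and $\|(H_n-E_n)R_n(k)\|\leq 2$, for compactly supported bounded $g$) produces a companion identity that, combined with the smeared commutator, yields
\[ (H_n - E_n)\eta(g) = -\eta(\omega_n g) \]
for the continuous linear map $\eta(g) := a(g)\psi_n - \int\overline{g(k)}\chi_n(k)\d k$. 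Replacing $g$ by $e^{-t\omega_n}g$, the function $u(t) := \eta(e^{-t\omega_n}g)$ is $C^1$ into $\HS$, takes values in $\cD(H_n)$, satisfies $\partial_t u = (H_n-E_n)u$, and vanishes as $t\to\infty$ (by $\|e^{-t\omega_n}g\|_{L^2} \leq e^{-tc_n}\|g\|_{L^2}$ and continuity of $\eta$). Since $H_n - E_n \geq 0$, we have $\partial_t\|u(t)\|^2 = 2\,\mathrm{Re}\langle u(t),(H_n-E_n)u(t)\rangle \geq 0$, so $\|u\|^2$ is non-decreasing; combined with $u(t)\to 0$ this forces $u(0) = \eta(g) = 0$. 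By density of compactly supported bounded $g$ in $\hs$, this gives $a_k\psi_n = \chi_n(k) = -f(k)R_n(k)\sigma_x\psi_n$ for a.e.\ $k$.

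The main technical step is the interchange of $H_n-E_n$ with the Bochner integral, for which closedness suffices thanks to the uniform bounds on $R_n(k)$ and $(H_n-E_n)R_n(k)$ coming crucially from the spectral gap $\omega_n\geq c_n>0$ granted by \cref{hyp3}\cref{part:pos}. A cleaner alternative, if preferred, is to verify the identity directly at the level of Fock components using the pointwise commutator $[a_k,\varphi(f)]\psi = f(k)\psi$ on symmetric $L^2$ functions, which reduces the problem to a componentwise rewriting of $H_n\psi_n = E_n\psi_n$ with one momentum variable separated out.
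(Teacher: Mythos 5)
Your argument is correct, but note that the paper does not actually prove \cref{lem:pullthrough}: it records the pull-through formula as well known and simply points to the literature (Bach--Fr\"ohlich--Sigal, G\'erard), so your proposal is a self-contained substitute for a citation rather than a variant of an argument in the text. Your route --- smear against bounded, compactly supported $g$ to get $(H_n-E_n)a(g)\psi_n=-a(\omega_n g)\psi_n-\langle g,f\rangle\,\sigma_x\psi_n$, introduce the candidate $\chi_n(k)=-f(k)R_n(k)\sigma_x\psi_n$, and annihilate the difference $\eta(g)$ by substituting $g\mapsto e^{-t\omega_n}g$ and playing the monotonicity of $\|u(t)\|^2$ under $\partial_t u=(H_n-E_n)u\ge 0$ against the decay $\|e^{-t\omega_n}g\|\le e^{-tc_n}\|g\|$ --- is complete in outline and correctly isolates where the strict positivity $\inf_k\omega_n(k)>0$ from \cref{hyp3} \cref{part:pos} enters; the signs and the identification $a(g)\psi_n=\int\overline{g(k)}\,a_k\psi_n\d k$ all check out. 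Two points deserve tightening. First, the extension of the commutator identity from the finite-particle core to $\psi_n$ should not be phrased as passing to the limit along a graph-convergent sequence $\phi_j\to\psi_n$, since the term $a(g)H_n\phi_j$ is controlled by $\dG(\omega_n)^{1/2}H_n\phi_j$, which graph convergence does not supply; the clean version is the adjoint computation $\langle H_n a^\dagger(g)\xi,\psi_n\rangle=E_n\langle a^\dagger(g)\xi,\psi_n\rangle$ for $\xi$ in the core, which simultaneously yields $a(g)\psi_n\in\cD(H_n)$ and the smeared identity, and your closing remark about a componentwise verification shows you see this alternative. Second, $\|(H_n-E_n)R_n(k)\|\le 1$ already by the spectral theorem, so the constant $2$ is superfluous (though harmless). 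With these adjustments the proposal is a valid proof.
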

\noindent
The infrared bounds we want to obtain in this section are bounds on $\Rsl(k)\sigma_x\psi_n$. To that end, we start by translating \cref{hyp3} into a resolvent bound.
\begin{lemma} \label{lem:1}  For all $n\in\IN$, we have $\braket{\sigma_x \psi_n,\psi_n}=0 $ and
	\[
	0 \leq \inn{ \sigma_x \psi_n , ( H_{n} - E_{n} )^{-1}   \sigma_x  \psi_n } =    - \frac{1}{2} e_{n}''(0).
	\]
\end{lemma}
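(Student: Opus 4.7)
The plan is to apply analytic perturbation theory to the family $H_n(\mu) := H_n + \mu \sigma_x \otimes \Id$. Since $\sigma_x \otimes \Id$ is bounded and, by \cref{prop:ground-state} together with \cref{hyp3}, $E_n$ is a simple eigenvalue of $H_n$ isolated from the rest of $\sigma(H_n)$ by a gap of at least $\inf_k \omega_n(k) > 0$, Kato's theorem furnishes real-analytic branches $\mu \mapsto e_n(\mu)$ and $\mu \mapsto \psi_n(\mu)$ in a neighborhood of zero with $\psi_n(0) = \psi_n$, $\|\psi_n(\mu)\| = 1$, and the customary gauge $\mathrm{Re}\braket{\psi_n, \partial_\mu\psi_n(0)} = 0$.

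First I would differentiate the eigenvalue equation $H_n(\mu)\psi_n(\mu) = e_n(\mu)\psi_n(\mu)$ at $\mu = 0$ and pair with $\psi_n$ to obtain the Hellmann-Feynman identity $e_n'(0) = \braket{\sigma_x \psi_n, \psi_n}$. By \cref{lem:sym}, $e_n$ is an even function of $\mu$, so $e_n'(0) = 0$, which establishes the first claim. With $e_n'(0) = 0$, the first-order equation reduces to $(H_n - E_n)\partial_\mu\psi_n(0) = -\sigma_x\psi_n$; because $\sigma_x\psi_n \perp \psi_n$ and $H_n - E_n$ is boundedly invertible on $\{\psi_n\}^\perp$ (again by the gap), I can invert to $\partial_\mu\psi_n(0) = -(H_n - E_n)^{-1}\sigma_x\psi_n$, with the resolvent understood on the orthogonal complement of $\psi_n$.

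Differentiating the eigenvalue equation once more and pairing with $\psi_n$ eliminates the term $\braket{\psi_n, (H_n - E_n)\partial_\mu^2\psi_n(0)}$ by self-adjointness and the term $2e_n'(0)\braket{\psi_n, \partial_\mu\psi_n(0)}$ by the previous step, leaving
\[
e_n''(0) \;=\; 2\braket{\psi_n, \sigma_x\,\partial_\mu\psi_n(0)} \;=\; -2\braket{\sigma_x\psi_n,\,(H_n-E_n)^{-1}\sigma_x\psi_n}.
\]
Dividing by $-2$ yields the claimed identity, and non-negativity follows at once from $H_n \geq E_n$, which makes $(H_n-E_n)^{-1}$ non-negative on $\{\psi_n\}^\perp$. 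The only substantive input is the spectral gap provided by \cref{prop:ground-state}; once that is granted, the rest is a standard second-order Rayleigh--Schrödinger computation and I foresee no serious obstacle.
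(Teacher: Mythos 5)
Your proposal is correct and follows essentially the same route as the paper: the paper likewise invokes the type~(A) analyticity of $\eta\mapsto H_n+\eta\,\sigma_x\otimes\Id$, the isolation and simplicity of $E_n$ from \cref{prop:ground-state}, first-order perturbation theory combined with \cref{lem:sym} to get $e_n'(0)=\braket{\psi_n,\sigma_x\psi_n}=0$, and second-order perturbation theory for the resolvent identity. You have merely written out the Rayleigh--Schr\"odinger computation that the paper leaves implicit.
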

\begin{proof} The proof uses analytic perturbation theory, for details see \cite{Kato.1980,ReedSimon.1978}.
	  The operator valued function $\eta \mapsto H_n(\eta) := H_n + \eta\sigma_x\otimes \Id$ 
	 defines an analytic family of type (A) for $\eta\in\IC$ (cf. \cite[Theorem 2.6]{Kato.1980}). 	By  \cref{prop:ground-state}, we
	 know that  $e_{n}(0)$ is a non-degenerate eigenvalue of $H_n(0)$ isolated from the essential spectrum.
	First order perturbation theory now yields $e_n'(0) = \braket{\psi_n,\sigma_x\psi_n}$. Hence, \cref{lem:sym} implies $\braket{\psi_n,\sigma_x\psi_n}=0$. Then, by second order perturbation theory, we obtain the second equality.
\end{proof}
\noindent
This gives us the required infrared bound.
\begin{lemma} \label{lem:estimate} 
We have
$\displaystyle \| \Rsl(k) \sigma_x  \psi_m \| \leq \sqrt{ \frac{-e_n''(0)}{\omega_n(k)} } 
$ for all $n\in\IN$.
\end{lemma}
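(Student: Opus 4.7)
The plan is to reduce the claim to the perturbation identity in \cref{lem:1} via a purely spectral-theoretic computation. First I would write
\[
\|R_n(k)\sigma_x\psi_n\|^2 = \langle \sigma_x\psi_n, R_n(k)^2\, \sigma_x\psi_n\rangle,
\]
and bound $R_n(k)^2 \le \omega_n(k)^{-1} R_n(k)$ as an operator inequality, using that $R_n(k)$ is a bounded positive operator of norm at most $\omega_n(k)^{-1}$, which is the displayed bound immediately preceding the pull-through formula.

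The next step exploits the spectral gap from \cref{prop:ground-state}. Since $E_n$ is a simple isolated eigenvalue of $H_n$ with eigenvector $\psi_n$, and \cref{lem:1} gives $\sigma_x\psi_n \perp \psi_n$, the vector $\sigma_x\psi_n$ lies in the spectral subspace of $H_n$ corresponding to $(E_n,\infty)$. On this subspace $H_n - E_n$ is strictly positive, and since $\omega_n(k) > 0$ the operator inequality
\[
R_n(k) = (H_n - E_n + \omega_n(k))^{-1} \le (H_n - E_n)^{-1}
\]
holds when restricted to $\{\psi_n\}^\perp$. Combining the two estimates and invoking the second-order perturbation identity from \cref{lem:1} yields
\[
\|R_n(k)\sigma_x\psi_n\|^2 \le \omega_n(k)^{-1}\langle \sigma_x\psi_n, (H_n - E_n)^{-1} \sigma_x\psi_n\rangle = \frac{-e_n''(0)}{2\omega_n(k)},
\]
from which the claim (in fact with a slightly better constant than stated) follows by taking square roots.

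I do not anticipate a genuine obstacle here: the only bookkeeping concerns where $(H_n - E_n)^{-1}$ is defined, which is handled by the orthogonality $\sigma_x\psi_n \perp \psi_n$ together with the spectral gap guaranteed by \cref{hyp3}\cref{part:pos} and \cref{prop:ground-state}. Everything else is routine functional calculus for self-adjoint operators.
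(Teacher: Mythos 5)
Your argument is correct and is essentially the paper's own proof in a slightly different dress: the paper factors $\Rsl(k)\sigma_x\psi_n$ through $(H_n-E_n)^{\pm 1/2}$ and bounds the two operator-norm factors by $\omega_n(k)^{-1/2}$ and $\sqrt{-e_n''(0)}$, which rests on exactly the same two ingredients you use, namely \cref{lem:1} together with the resolvent bound \cref{4.4}. Your quadratic-form version $R_n(k)^2\le \omega_n(k)^{-1}R_n(k)\le \omega_n(k)^{-1}(H_n-E_n)^{-1}$ on $\{\psi_n\}^\perp$ is a valid reformulation, and the extra factor $\tfrac12$ from \cref{lem:1} that you retain (and the paper discards) is equally available in their computation.
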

\begin{proof}
By  the product inequality, we have
\begin{equation} \label{eq:prod}
\| \Rsl(k) \sigma_x \psi_n \| \leq  \| R_n(k)
( H_n - E_n )^{1/2} \| \| ( H_n - E_n )^{-1/2}  \sigma_x \psi_n \| .
\end{equation}
By \cref{lem:1}, the second factor on the right hand side can be estimated using
\[
\| (H_n - E_n)^{-1/2}  \sigma_x  \psi_n \| \leq  \sqrt{ -  e_n''(0) } \; .
\]
It remains to estimate the first factor in \cref{eq:prod}. Using $\nn{R_n(k)^{1/2} (H_n - E_n)^{1/2}} \leq 1$ we find with \eqref{4.4} 
\[
\nn{ R_n(k)  (H_n - E_n)^{1/2} } \leq   \nn{ R_n(k)^{1/2} } \leq \frac{1}{\sqrt{ \omega_n(k)}}.
\qedhere \]
\end{proof}
\noindent
We combine this result with the pull-through formula.
\begin{lemma}\label{coro:photon bounds} %Let $n\in\IN$ and $k\in \IR^d$, such that $a_k\psi_n\in\FS$. Then 
Let  $B_1 = \{ x \in \IR^d : | x | \leq 1 \}$. 
	\begin{enumerate}[(i)]
		\item \label{photon bounds 1} For all $n \in \IN$ and almost all  $k \in \IR^d$, we have 
			$\displaystyle \|a_k\psi_{n}\| \le \frac{ \abs{f(k)}}{\sqrt{\omega(k)}} C_\chi^{1/2}$.
		\item\label{photon bounds 2} There exist  an $\alpha > 0$ and a measurable function $h \: B_1 \times \IR^d \rightarrow [0,\infty)$ with  \[
		 \sup_{p \in B_1} \nn{ h(p,\cdot)}_1 < \infty,
		 \]  such that for  all  $n \in \N$ and almost all   $p \in  B_1$ and $k \in \R^d$  
		 \[ \|a_{k+p}\psi_{n}-a_k\psi_{n}\| \le   \abs p^\alpha  h(p,k). \]
	\end{enumerate}
\end{lemma}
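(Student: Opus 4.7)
My plan is to obtain both estimates from the pull-through formula \cref{lem:pullthrough} together with the resolvent estimate \cref{lem:estimate}, using the telescoping trick and the first-order resolvent identity to split the difference $a_{k+p}\psi_n - a_k\psi_n$ into a piece controlled by the Hölder continuity of $f$ and a piece controlled by the Hölder continuity of $\omega_n$.

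Part (i) is essentially immediate: \cref{lem:pullthrough} yields $\|a_k\psi_n\| = |f(k)|\,\|R_n(k)\sigma_x\psi_n\|$, and \cref{lem:estimate} together with $-e_n''(0)\le C_\chi$ from \cref{hyp3} \cref{part:diff} and the monotonicity $\omega_n \ge \omega$ gives $\|R_n(k)\sigma_x\psi_n\|\le \sqrt{C_\chi/\omega(k)}$, which is the claim.

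For part (ii), I would first use the pull-through formula to write $a_{k+p}\psi_n - a_k\psi_n = -[f(k+p)R_n(k+p) - f(k)R_n(k)]\sigma_x\psi_n$, and then split
\begin{equation*}
f(k+p)R_n(k+p) - f(k)R_n(k) = (f(k+p)-f(k))R_n(k+p) + f(k)\bigl(R_n(k+p)-R_n(k)\bigr),
\end{equation*}
combined with the resolvent identity $R_n(k+p) - R_n(k) = R_n(k+p)(\omega_n(k)-\omega_n(k+p))R_n(k)$. The first summand is bounded by $|f(k+p)-f(k)|\sqrt{C_\chi/\omega(k+p)}$ via \cref{lem:estimate}; after the substitution $k\mapsto k-p$ this becomes $|p|^{\alpha_2}$ times a function that is integrable in $k$ uniformly in $p\in B_1$ by \cref{hyp1} \cref{h:f hoelder}. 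The second summand is bounded by
\begin{equation*}
\frac{|f(k)|\,|\omega_n(k+p) - \omega_n(k)|\sqrt{C_\chi}}{\omega(k+p)\sqrt{\omega(k)}},
\end{equation*}
using $\|R_n(k+p)\|\le \omega_n(k+p)^{-1}$ and applying \cref{lem:estimate} to the remaining $R_n(k)\sigma_x\psi_n$. Extracting $|p|^{\alpha_1}$ from $|\omega_n(k+p)-\omega_n(k)|$ via \cref{hyp3} \cref{part:omegamprop} leaves a function whose $L^1$-norm in $k$ is controlled uniformly in $p\in B_1$ by \cref{hyp1} \cref{h:f L1 integrability}. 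Setting $\alpha = \min(\alpha_1,\alpha_2)$ and using $|p|^{\alpha_i-\alpha}\le 1$ for $|p|\le 1$, the sum of the two bounds defines the required $h(p,k)$.

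The main obstacle is extracting the $|p|^{\alpha_1}$ factor from $|\omega_n(k+p)-\omega_n(k)|$ in a manner uniform both in $n\in\mathbb{N}$ and globally in $k\in\mathbb{R}^d$, since \cref{hyp3} \cref{part:omegamprop} only asserts \emph{local} Hölder continuity. I would handle this by partitioning the $k$-integration into a ball $\{|k|\le R\}$, on which $\omega_n\to\omega$ uniformly allows a Hölder constant uniform in $n$ via \cref{hyp1} \cref{h:hoelder}, and its complement, where \cref{hyp1} \cref{h:to infinity} guarantees that $\omega(k+p)$ and $\omega(k)$ both grow, so that the trivial bound $|\omega_n(k+p)-\omega_n(k)|\le \omega_n(k+p)+\omega_n(k)$ can be interpolated with the local Hölder bound to recover a (possibly smaller) power of $|p|$ while preserving integrability via \cref{hyp1} \cref{h:f L1 integrability}.
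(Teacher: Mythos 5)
Your proposal is correct and follows essentially the same route as the paper: part (i) from the pull-through formula (\cref{lem:pullthrough}) combined with \cref{lem:estimate} and the monotonicity $\omega_n\ge\omega$, and part (ii) from the same splitting via the first-order resolvent identity into an $f$-increment term and an $\omega_n$-increment term, estimated against \cref{hyp1} \cref{h:f hoelder} and \cref{h:f L1 integrability} respectively (the paper merely writes the decomposition with the roles of $k$ and $k+p$ interchanged, which avoids your change of variables in the first term). Your closing concern about upgrading the \emph{local} H\"older bound on $\omega_n$ to a global, $n$-uniform one is a point the paper glosses over entirely --- it simply asserts the existence of the constant $C$ --- so your extra care there goes beyond, rather than deviates from, the printed argument.
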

\begin{proof}
	\cref{photon bounds 1} follows directly from \cref{lem:pullthrough,lem:estimate}, and from the monotonicity of $(\omega_n)_{n \in \IN}$.
	
	Let $\alpha_1$ be the minimum of the values from \cref{hyp2} \cref{h:hoelder} and \cref{hyp3} \cref{part:omegamprop} and let $\alpha_2$ be as in \cref{hyp2} \cref{h:f hoelder}.
	Then, we set $\alpha = \min\{\alpha_1,\alpha_2\}$ and
	\[\tilde{h}(p,k) = \max\left\{\frac{ \abs{f(k+p) - f(k)} }{ \abs p^{\alpha}\sqrt{ \omega (k) }  },\frac{ \abs{f(k+p)}  }{ \omega(k) \sqrt{ \omega(k+p)} }\right\}.\]
	Then, by \cref{hyp2}, $\tilde h$ satisfies the above statements on $h$.
	Further, using the  resolvent identity  and \cref{lem:pullthrough}, we obtain
	\begin{align}
		a_{k+p}\psi_{n}-a_k\psi_{n} &=   f(k)\Rsl(k)\sigma_x\psi_{n} - f(k+p)\Rsl(k+p)\sigma_x\psi_{n}  \nonumber  \\
		&= (f(k) -f(k+p))\Rsl(k)\sigma_x\psi_{n} + f(k+p)(\Rsl(k)  -\Rsl(k+p) )\sigma_x\psi_{n} \nonumber \\
		&=  (f(k) -f(k+p))\Rsl(k)\sigma_x\psi_{n} \label{term 1}  \\&\qquad+ f(k+p)\Rsl(k)(\omega_n(k+p)-\omega_n(k))\Rsl(k+p)\sigma_x\psi_{n} \label{term 2} .
	\end{align}
By \cref{lem:estimate,hyp3}, we find
\[
\abs{\eqref{term 1}} \leq C_\chi^{1/2} \frac{ \abs{f(k+p) - f(k)} }{ \sqrt{ \omega (k) }} \leq C_\chi^{1/2} \abs p^{\alpha} \tilde{h}(p,k).
\]
Further, the local $\alpha_1$-Hölder continuity of $\omega_n$ yields there is $C>0$, such that
\[
\abs{\eqref{term 2}} \leq C \abs p ^{\alpha} \tilde{h}(p,k).
\]
This proves the statement for the function $h = (C_\chi^{1/2} + C ) \tilde{h}$. 
\end{proof}
\noindent
We use the above infrared bounds to derive an upper bound on the photon number operator and the free field energy
\begin{equation}\label{def:numberop}
	N = \Id \otimes \dG(1) \quad\mbox{and}\quad \Hf = \Id\otimes \dG(\omega)
\end{equation}
acting on the ground states $\psi_n$.
The proof uses the following well-known representation of the quadratic form associated with second quantization operators in terms of pointwise annihilation operators.
\begin{lemma}\label{lem:integralrep}
	Assume $A\:\IR^d \rightarrow  [0,\infty)$ is measurable and $\psi\in\FS$.  Then the map $k \mapsto \|A(k)^{1/2}a_k\psi\| $ is in  $L^2(\IR^d)$ if and only if $\psi\in\cD(\dG(A)^{1/2})$. Further, for any $\phi_1,\phi_2\in \cD(\dG(A)^{1/2})$ we have
	\[  \braket{\dG(A)^{1/2}\phi_1,\dG(A)^{1/2}\phi_2} =  \int_{\IR^d} A(k) \braket{a_k\phi_1,a_k\phi_2} \d k.  \]
\end{lemma}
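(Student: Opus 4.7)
My plan is to reduce everything to a direct computation on each Fock sector, using the definitions of $\dG(A)$ and $a_k$, and then pass from the norm identity to the bilinear form identity by polarization.

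First, I would work in the $n$-particle sector. Writing $\psi = (\psi^{(n)})_{n \in \N_0}$ and using that $\psi^{(n)}$ is symmetric, I would compute
\[
\nn{\dG(A)^{1/2}\psi}^2 = \sum_{n=1}^\infty \int_{\R^{nd}} \Bigl(\sum_{i=1}^n A(k_i)\Bigr) \abs{\psi^{(n)}(k_1,\ldots,k_n)}^2 \d k_1\cdots \d k_n = \sum_{n=1}^\infty n \int_{\R^{nd}} A(k_1) \abs{\psi^{(n)}(k_1,\ldots,k_n)}^2 \d k_1\cdots \d k_n,
\]
where the second equality uses the permutation-invariance of $\abs{\psi^{(n)}}^2$. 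On the other hand, the definition \eqref{def:pointwise} of $a_k$ gives, for fixed $k$ at which the slices $\psi^{(n)}(k,\cdot)$ make sense,
\[
\nn{a_k\psi}^2 = \sum_{n=1}^\infty n \int_{\R^{(n-1)d}} \abs{\psi^{(n)}(k,k_2,\ldots,k_n)}^2 \d k_2\cdots \d k_n.
\]
Multiplying by $A(k)$ and integrating over $k$, Tonelli's theorem (all integrands are nonnegative and measurable) allows me to interchange sums and integrals and match the two expressions. This gives the identity
\[
\int_{\R^d} A(k)\, \nn{a_k\psi}^2 \d k = \nn{\dG(A)^{1/2}\psi}^2 \in [0,\infty],
\]
which already establishes the equivalence between $k\mapsto \nn{A(k)^{1/2} a_k\psi}\in L^2(\R^d)$ and $\psi \in \cD(\dG(A)^{1/2})$, after verifying measurability of the map $k \mapsto \nn{a_k\psi}$ (which follows from Tonelli applied sector-wise to the nonnegative integrands $\abs{\psi^{(n)}}^2$, since each sector contribution is measurable in $k$).

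Next, for the bilinear identity I would use polarization. For $\phi_1,\phi_2 \in \cD(\dG(A)^{1/2})$, the complex polarization identity
\[
4\sc{\dG(A)^{1/2}\phi_1,\dG(A)^{1/2}\phi_2} = \sum_{j=0}^3 \i^j\, \nn{\dG(A)^{1/2}(\phi_1+\i^j \phi_2)}^2
\]
together with the same identity applied to $\nn{a_k\cdot}^2$ (which requires that $a_k(\phi_1+\i^j\phi_2) = a_k\phi_1 + \i^j a_k\phi_2$ for a.e.~$k$, an immediate consequence of the explicit formula \eqref{def:pointwise} once one fixes a common full-measure set of $k$'s on which both slices make sense) reduces the bilinear identity to the already-proved quadratic identity applied to the four vectors $\phi_1+\i^j\phi_2$. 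Linearity of the integral then yields the claim.

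The only subtle step I expect is the measurability/almost-everywhere business: ensuring that $k\mapsto a_k\psi$ is weakly measurable as an $\FS$-valued map and that for $\phi_1,\phi_2$ both in the form domain a single full-measure set of $k$ works for both; this is handled by a standard Tonelli argument sector-by-sector. The rest is an algebraic manipulation using the symmetry of the Fock vectors and the explicit formula for $a_k$.
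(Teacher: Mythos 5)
Your proposal is correct. Note, however, that the paper does not actually prove this lemma: its ``proof'' is a one-line citation to the literature (Reed--Simon), so there is no argument in the paper to compare against step by step. What you have written is essentially the standard proof that those references contain: the sector-wise identity
\[
\int_{\IR^d} A(k)\,\nn{a_k\psi}^2 \d k \;=\; \sum_{n\ge 1} n\int_{\IR^{nd}} A(k_1)\abs{\psi^{(n)}}^2 \;=\; \nn{\dG(A)^{1/2}\psi}^2 \in [0,\infty],
\]
obtained from the symmetry of $\psi^{(n)}$ and Tonelli, gives the domain characterization, and polarization over the four vectors $\phi_1+\i^j\phi_2$ (all of which lie in the form domain, a vector space) gives the sesquilinear identity; the integrability of $k\mapsto A(k)\braket{a_k\phi_1,a_k\phi_2}$ needed to split the integral is guaranteed by Cauchy--Schwarz together with the two finite quadratic integrals, which your polarization route supplies implicitly. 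The one cosmetic point to watch is that the precise form of the complex polarization identity (placement of $\i^j$ versus $\i^{-j}$) depends on which argument of the inner product is taken antilinear; since you apply the same identity to both sides, the conclusion is unaffected. In short: your argument is complete and self-contained, and it fills in exactly the computation the paper delegates to the references.
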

\begin{proof}
	The statement is standard in the literature,  see for example \cite{ReedSimon.1975}.
\end{proof}
\noindent
The next lemma will provide  a photon number bound.
\begin{lemma}\label{lem:numberbound} For all $n \in \IN$ 
	we have $\psi_n\in\cD(N^{1/2})\cap\cD(\Hf)$ and the inequalities $\braket{N^{1/2}\psi_n,N^{1/2}\psi_n}\le C_\chi \|\omega^{-1/2}f\|^2$ and $\braket{\psi_n,\Hf\psi_n}\le  C_\chi \|f\|^2$.
\end{lemma}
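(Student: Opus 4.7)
The plan is to apply \cref{lem:integralrep} to the two choices $A \equiv 1$ and $A = \omega$, using the pointwise infrared bound from \cref{coro:photon bounds} \cref{photon bounds 1} to control the resulting integrals of $\|a_k \psi_n\|^2$.

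First, I would argue membership in the form domains. By \cref{lem:integralrep}, $\psi_n \in \cD(N^{1/2}) = \cD(\dG(1)^{1/2})$ iff the map $k \mapsto \|a_k \psi_n\|$ lies in $L^2(\IR^d)$, and similarly $\psi_n \in \cD(\Hf^{1/2}) = \cD(\dG(\omega)^{1/2})$ iff $k \mapsto \sqrt{\omega(k)} \|a_k \psi_n\|$ lies in $L^2(\IR^d)$. Squaring the bound from \cref{coro:photon bounds} \cref{photon bounds 1} gives
\[
\|a_k \psi_n\|^2 \le C_\chi \frac{|f(k)|^2}{\omega(k)}
\quad \text{for a.e. } k \in \IR^d,
\]
and both $\omega^{-1/2} f$ and $f$ are in $\hs$ by \cref{hyp2} \cref{h:epsilon integrability}, so both integrability conditions hold. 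Invoking the identity in \cref{lem:integralrep} with $\phi_1 = \phi_2 = \psi_n$ then yields the desired inequalities:
\[
\inn{N^{1/2}\psi_n, N^{1/2}\psi_n} = \int_{\IR^d} \|a_k \psi_n\|^2 \d k \le C_\chi \|\omega^{-1/2} f\|^2,
\]
and analogously, with the extra factor $\omega(k)$ cancelling the $\omega(k)^{-1}$ on the right,
\[
\inn{\Hf^{1/2}\psi_n, \Hf^{1/2}\psi_n} = \int_{\IR^d} \omega(k) \|a_k \psi_n\|^2 \d k \le C_\chi \|f\|^2.
\]

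It remains to upgrade the form-domain membership $\psi_n \in \cD(\Hf^{1/2})$ to the operator-domain statement $\psi_n \in \cD(\Hf)$. For this I would use that $\psi_n$ is an eigenvector of $H_n$ and hence lies in $\cD(H_n) = \cD(\Id \otimes \dG(\omega_n))$ by \cref{lem:self-adjoint}. Since $0 \le \omega \le \omega_n$ pointwise and $\dG(\omega)$, $\dG(\omega_n)$ both act componentwise as multiplication on each $n$-particle sector, the condition $\omega_n^{(\ell)} \psi_n^{(\ell)} \in L^2$ immediately implies $\omega^{(\ell)} \psi_n^{(\ell)} \in L^2$ for each $\ell$, and summing over $\ell$ gives $\psi_n \in \cD(\dG(\omega)) = \cD(\Hf)$. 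With this, $\inn{\psi_n, \Hf \psi_n}$ is the honest operator inner product, and it equals the form expression already bounded.

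No step is really the obstacle — the heart of the argument is already done in \cref{coro:photon bounds}, and this lemma is essentially a bookkeeping consequence. The only subtle point is being careful to distinguish $\cD(\Hf)$ from $\cD(\Hf^{1/2})$, which is handled by the componentwise multiplication comparison between $\dG(\omega)$ and $\dG(\omega_n)$.
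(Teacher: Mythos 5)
Your proposal is correct and follows essentially the same route as the paper, whose proof is a one-line citation of \cref{coro:photon bounds}~\cref{photon bounds 1} together with \cref{lem:integralrep} for the two bounds, and of \cref{lem:self-adjoint} (via $\psi_n\in\cD(H_n)=\cD(\Id\otimes\dG(\omega_n))\subseteq\cD(\Id\otimes\dG(\omega))$, using $\omega\le\omega_n$) for the domain statement. You have merely written out the details the paper leaves implicit, and they check out.
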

\begin{proof}
	The property $\psi_n\in\cD(\Hf)$ was proven in \cref{lem:self-adjoint}. The remaining statements follow from combining the upper bound in \cref{coro:photon bounds} \cref{photon bounds 1} and \cref{lem:integralrep}.
\end{proof}

\section{The Compactness Argument}\label{sec:compact}

In this section, we construct a compact set $K\subset \HS$, such that $(\psi_n)_{n\in\IN}\subset K$. We then use the compactness of $K$ to prove \cref{maintheorem}. Throughout this section, we assume that  \cref{hyp1,,hyp3} hold.

Let us begin with the definition  of $K$. To that end, assume $y_i$ for $i=1,\ldots,\ell$ is the position operator acting on $\psi^{(\ell)}\in\FS^{(\ell)}$ as
\begin{equation}
	\widehat{y_i\psi^{(\ell)}}(x_1,\ldots,x_\ell) = x_i\widehat{\psi^{(\ell)}}(x_1,\ldots,x_\ell),
\end{equation}
where $\widehat{\cdot}$ denotes the Fourier transform. 
For $\delta>0$, we now define a closed quadratic form $q_\delta$ acting on $\phi=(\phi_1,\phi_2)\in\cQ(q_\delta)\subset \HS$ with natural domain as
\begin{equation}\label{def:form}
	q_\delta(\phi)  =  \braket{N^{1/2}\phi,N^{1/2}\phi}  + \sum_{\substack{\ell\in\IN\\s\in\{1,2\}}} \frac{1}{\ell^2}\sum_{i=1}^\ell \Braket{ \phi_s^{(\ell)}, \lvert y_i\lvert^{\delta}\phi_s^{(\ell)}  } + \braket{\Hf^{1/2}\phi,\Hf^{1/2}\phi},
\end{equation}
where $N$ and $\Hf$ are defined as in \cref{def:numberop}. Now define
\begin{equation}
	K_{\delta,C} := \{\phi\in\cQ(q_\delta): \|\phi\|\le 1,q_\delta(\phi)\le C \} \qquad\mbox{for}\ C>0.
\end{equation}
\begin{lemma}\label{lem:compact}
	For all $\delta,C>0$ the set $K_{\delta,C}\subset \HS$ is compact.
\end{lemma}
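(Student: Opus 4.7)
My plan is to prove that $K_{\delta,C}$ is closed and totally bounded in $\HS$ by decomposing along the photon-number sectors, establishing relative compactness in each fixed sector from the position/momentum weights built into $q_\delta$, and uniformly controlling the tail via the photon-number contribution.

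First, $K_{\delta,C}$ is closed in $\HS$: the norm is continuous, and the closed form $q_\delta$ is lower semi-continuous (extended by $+\infty$ off its form domain), so $\{q_\delta \le C\}$ is closed. Second, from $\braket{\phi,N\phi} = \braket{N^{1/2}\phi,N^{1/2}\phi} \le q_\delta(\phi) \le C$ I obtain the tail estimate
\[
\sum_{\ell \ge L}\sum_{s=1}^{2}\|\phi_s^{(\ell)}\|^2 \;\le\; \frac{1}{L}\sum_{\ell,s}\ell\,\|\phi_s^{(\ell)}\|^2 \;\le\; \frac{C}{L}, \qquad \phi \in K_{\delta,C}.
\]
Letting $P_L$ denote the orthogonal projection onto $\IC^2 \otimes \bigoplus_{\ell<L}\FS^{(\ell)}$, this gives $\sup_{\phi \in K_{\delta,C}}\|\phi - P_L\phi\| \to 0$ as $L\to\infty$.

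Third, for each fixed $\ell \in \IN_0$ and $s \in \{1,2\}$, I would prove that the projected set $\{\phi_s^{(\ell)} : \phi \in K_{\delta,C}\}$ is relatively compact in $\FS^{(\ell)}$. The case $\ell=0$ is trivial. For $\ell \ge 1$, symmetry of $\phi_s^{(\ell)}$ together with $q_\delta(\phi) \le C$ yields
\[
\int_{\IR^{d\ell}} \Bigl(\sum_{i=1}^{\ell} |x_i|^\delta\Bigr) |\widehat{\phi_s^{(\ell)}}(x)|^2 \d x \le C\ell^2, \qquad
\int_{\IR^{d\ell}} \Bigl(\sum_{i=1}^{\ell} \omega(k_i)\Bigr) |\phi_s^{(\ell)}(k)|^2 \d k \le C,
\]
in addition to the uniform $L^2$-bound $\|\phi_s^{(\ell)}\| \le 1$. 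The first estimate gives tightness in the position variable, since $\sum_i |x_i|^\delta \ge (|x|/\sqrt{\ell})^\delta$ on $\IR^{d\ell}$, and the second gives tightness in the momentum variable because \cref{hyp2}\cref{h:to infinity} implies $\sum_i \omega(k_i) \to \infty$ as $|k|\to\infty$ in $\IR^{d\ell}$. By Plancherel, momentum tightness translates into equicontinuity of translations in position, so the Riesz--Fr\'echet--Kolmogorov theorem yields relative compactness in $L^2(\IR^{d\ell})$; symmetry is preserved under $L^2$-limits.

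Finally, $K_{\delta,C}$ is totally bounded: for $\eta > 0$, pick $L \ge 4C/\eta^2$ so that $\|\phi - P_L\phi\| < \eta/2$ uniformly, and cover the relatively compact set $P_L K_{\delta,C}$ (which lies in a finite direct sum of sectors, each compact by the third step) by finitely many $\eta/2$-balls, producing a finite $\eta$-net for $K_{\delta,C}$. Combined with closedness, this gives compactness. The main difficulty I foresee is the per-sector argument, where the two weighted $L^2$ estimates in position and momentum variables must be converted into an honest $L^2$-compactness statement; once this Riesz--Kolmogorov-style step is in place, the tail control and total-boundedness argument are routine.
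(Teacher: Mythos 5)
Your argument is correct and follows essentially the same route as the paper: decomposition into photon-number sectors, Rellich-type compactness on each fixed sector coming from the position weight $|y_i|^\delta$ together with the momentum weight $\omega(k)\to\infty$, and the number-operator term controlling the high sectors. The only difference is packaging: the paper passes to the self-adjoint operator $T$ associated with $q_\delta$ and shows its min-max eigenvalues diverge (compact resolvent, using $T_\ell \ge \ell$), whereas you verify closedness plus total boundedness directly via the explicit tail bound $\|\phi - P_L\phi\|^2 \le C/L$ and Riesz--Fr\'echet--Kolmogorov on each sector; both are complete.
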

\begin{proof}
	By \cref{lem:standard}, $q_\delta$ is nonnegative. Hence, there exists a self-adjoint nonnegative operator $T$ associated to $q_\delta$. By the general characterization of operators with compact resolvent (cf. \cite[Theorem XIII.64]{ReedSimon.1978}), $K_C$ is compact iff $T$ has compact resolvent iff the $i$-th eigenvalues of $T$ obtained by the min-max principle $\mu_i(T)$  tend to infinity, i.e., $\lim\limits_{i\to\infty}\mu_i(T)=\infty$. 
	
	To that end, we observe $T$ preserves the $\ell$ photon sectors $\IC^2\otimes \FS^{(\ell)}$ and denote $T_\ell = T\upharpoonright\IC^2\otimes \FS^{(n)}$. Now, since $(\omega+1)^{(\ell)}(K)\to\infty$ as $K\to\infty$  by \cref{hyp2} \cref{h:to infinity}, we can apply Rellich's criterion (cf. \cite[Theorem XIII.65]{ReedSimon.1978}) and hence $T_\ell$ has compact resolvent for all $\ell\in\IN_0$. As argued above, we have $\lim\limits_{i\to\infty}\mu_i(T_\ell)=\infty$. Further, since $T_\ell\ge \ell$, we have $\mu_i(T_\ell)\ge \ell$ and therefore $\lim\limits_{i\to\infty}\mu_i(T)=\infty$.
\end{proof}
\noindent
We now need to prove the following \lcnamecref{prop:eigenfunct-compact}, where $\psi_n$ are the normalized ground states of $H_n$ as defined in \cref{sec:groundstates}.
\begin{proposition}\label{prop:eigenfunct-compact}
	There are $\delta,C>0$, such that $\psi_n\in K_{\delta,C}$ for all $n\in\IN$.
\end{proposition}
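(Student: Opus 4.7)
The plan is to verify $q_\delta(\psi_n) \le C$ uniformly in $n$ for a suitable choice of $\delta, C > 0$; the normalization $\|\psi_n\| = 1$ is automatic. Of the three summands in $q_\delta$, the first $\|N^{1/2}\psi_n\|^2$ and the third $\|\Hf^{1/2}\psi_n\|^2$ are immediately controlled by \cref{lem:numberbound}. Hence the content of the proposition is a uniform bound on the middle term.

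Because $\psi_{n,s}^{(\ell)}$ is symmetric in its $\ell$ momentum variables, $\sum_{i=1}^\ell \braket{\psi_{n,s}^{(\ell)}, \abs{y_i}^\delta \psi_{n,s}^{(\ell)}} = \ell \braket{\psi_{n,s}^{(\ell)}, \abs{y_1}^\delta \psi_{n,s}^{(\ell)}}$. I would then invoke the standard Gagliardo--Slobodeckij characterization of the fractional Sobolev seminorm, valid for $0 < \delta < 2$ and for Hilbert-space-valued $g \in L^2(\IR^d;\cK)$:
\[
\int_{\IR^d} \abs{y}^\delta \nn{\hat g(y)}^2 \d y \;\asymp\; \int_{\IR^d}\int_{\IR^d} \frac{\nn{g(k+p) - g(k)}^2}{\abs p^{d+\delta}}\d p\, \d k.
\]
Applying this to $g = \psi_{n,s}^{(\ell)}(\cdot,k_2,\ldots,k_\ell)$ viewed as an $L^2(\IR^{(\ell-1)d})$-valued function of its first variable, and using the identity $\nn{\psi_{n,s}^{(\ell)}(k,\cdot)}^2 = \ell^{-1}\nn{a_k\psi_{n,s}^{(\ell)}}^2$, the factor $1/\ell^2$ in $q_\delta$ combines with the symmetrization factor $\ell$ and with this $\ell^{-1}$ to collapse the $\ell$-sum, reducing the middle term to a bound of the form
\[
C \int_{\IR^d}\int_{\IR^d} \frac{\nn{a_{k+p}\psi_n - a_k\psi_n}^2}{\abs p^{d+\delta}}\d p\, \d k.
\]

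I would bound this last integral uniformly in $n$ by splitting into $\abs p > 1$ and $\abs p \le 1$. For $\abs p > 1$ the triangle inequality together with \cref{coro:photon bounds}\,\cref{photon bounds 1} and \cref{hyp2}\,\cref{h:epsilon integrability} gives an integrand whose $k$-integral is bounded by $4C_\chi\nn{\omega^{-1/2}f}_2^2$, which is integrable against $\abs p^{-d-\delta}$ on $\{\abs p > 1\}$ for any $\delta > 0$. For $\abs p \le 1$ the main obstacle appears: \cref{coro:photon bounds}\,\cref{photon bounds 2} controls $h(p,\cdot)$ only in $L^1$, whereas to square and integrate one naively wants $L^2$. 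I circumvent this by interpolation: write $\nn{\cdot}^2 = \nn{\cdot}^{1-\theta}\nn{\cdot}^{1+\theta}$ for some $\theta \in (0,1)$, bound the first factor using \cref{photon bounds 2} by $(\abs p^\alpha h(p,k))^{1-\theta}$, and the second by $(\nn{a_{k+p}\psi_n}+\nn{a_k\psi_n})^{1+\theta}$, then apply \cref{photon bounds 1} followed by Hölder's inequality in $k$ with conjugate exponents $1/(1-\theta)$ and $1/\theta$. This yields
\[
\int \nn{a_{k+p}\psi_n - a_k\psi_n}^2 \d k \;\le\; C \abs p^{\alpha(1-\theta)} \nn{h(p,\cdot)}_1^{1-\theta}\nn{\omega^{-1/2}f}_{(1+\theta)/\theta}^{1+\theta}.
\]
Choosing $\theta = 1/(1+\epsilon)$ with $\epsilon$ from \cref{hyp2}\,\cref{h:epsilon integrability} makes $(1+\theta)/\theta = 2+\epsilon$, so every norm on the right is finite and uniformly bounded in $p \in B_1$ and in $n$. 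The remaining $\abs p$-integral $\int_{\abs p \le 1}\abs p^{\alpha(1-\theta)-d-\delta}\d p$ converges provided $\delta < \alpha\epsilon/(1+\epsilon)$; fixing any such $\delta$ (also $\delta < 2$ to legitimise the Gagliardo equivalence) yields the desired uniform bound $C$.

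The single hard step is this interpolation: the Hölder-type bound from \cref{coro:photon bounds}\,\cref{photon bounds 2} alone is too weak, and the pointwise bound \cref{photon bounds 1} alone carries no $\abs p$-smallness — both must be combined simultaneously, which is precisely why \cref{hyp2}\,\cref{h:epsilon integrability} asks for the strictly stronger $L^{2+\epsilon}$ integrability of $\omega^{-1/2}f$ rather than merely $L^2$.
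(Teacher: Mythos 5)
Your proposal is correct and follows essentially the same route as the paper, which proves the uniform bound on the localization term in \cref{lem:fouriertransformbound,prop:localizationbound}: the key step in both is interpolating between \cref{coro:photon bounds}\,\cref{photon bounds 1} and \cref{photon bounds 2} with the exponent tuned so that $h$ enters through its $L^1$ norm and $\omega^{-1/2}f$ through its $L^{2+\epsilon}$ norm, yielding the modulus of continuity $\abs{p}^{\alpha\epsilon/(1+\epsilon)}$ that is then converted into a fractional position-moment bound. The only differences are cosmetic: you invoke the Gagliardo--Slobodeckij equivalence and Hölder's inequality where the paper performs the equivalent scaling computation explicitly and uses Young's inequality, and you work with the full Fock-space norm of $a_{k+p}\psi_n-a_k\psi_n$ rather than sector-wise bounds.
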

\noindent
For the proof the following \lcnamecref{lem:fouriertransformbound} is essential. Hereby, for $n\in\IN$, $s\in\{1,2\}$ and $y,k\in\IR^d$, we introduce the notation
\begin{equation}
	\begin{aligned}
		&\widehat{\psi_{n,s}^{(\ell)}}(y) \: (y_1,\ldots,y_{\ell-1}) \mapsto \psi_{n,s}^{(\ell)}(y,y_1,\ldots,y_{\ell-1}),\\
		&{\psi_{n,s}^{(\ell)}}(k) \: (k_1,\ldots,k_{\ell-1}) \mapsto \psi_{n,s}^{(\ell)}(k,k_1,\ldots,k_{\ell-1}).
	\end{aligned}
\end{equation}
Due to the Fubini-Tonelli theorem, we have $\widehat{\psi_{n,s}^{(\ell)}}(y),{\psi_{n,s}^{(\ell)}}(k)\in L^2(\IR^{(\ell-1)d})$ for almost every $k,y\in\IR^d$. Further, comparing with the definition \cref{def:pointwise}, we observe
\begin{equation} \label{eq:relakpoint}   {\psi_{n,s}^{(\ell)}}(k) = \frac{1}{\sqrt{\ell+1}}(a_k \psi_{n,s})^{(\ell)}.\end{equation} 
\begin{lemma}\label{lem:fouriertransformbound}
	There exist $\delta > 0$ and  $C>0$, such that for all $p\in\IR^d$ and $n,\ell\in\IN$, $s\in\{1,2\}$
	\begin{equation} \label{eq:estcomp1} 
	\int_{\IR^d} |1-e^{-ipy}|^2\left\|\widehat{\psi_{n,s}^{(\ell)}}(y)\right\|^2_{L^2(\IR^{(\ell-1)d})} \d y \le \frac C{\ell+1} \min\left\{1,\abs p^\delta\right\}.
	\end{equation} 
We note that $\delta$ can be chosen as $\delta = \dfrac{\epsilon \alpha}{1+\epsilon}$, where $\alpha>0$ and $\epsilon>0$ are as in \cref{coro:photon bounds} \cref{photon bounds 2} and \cref{hyp2} \cref{h:epsilon integrability}, respectively.
\end{lemma}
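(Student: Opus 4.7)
The plan is to apply Plancherel's theorem in the first boson variable to rewrite the left-hand side as a momentum-space translation-difference integral, then invoke \eqref{eq:relakpoint} to express the resulting sections via pointwise annihilation operators on $\psi_n$, and finally combine the two parts of \cref{coro:photon bounds} by a H\"older interpolation in which the $L^{2+\epsilon}$ integrability from \cref{hyp2} \cref{h:epsilon integrability} is the decisive input for obtaining a strictly positive exponent of $\abs p$.

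\emph{Plancherel and pull-through.} In the first boson variable, multiplication by $1-e^{-\i py}$ is the Fourier dual of the translation difference $\phi(k)\mapsto \phi(k)-\phi(k+p)$. Combining Plancherel in the first variable with \eqref{eq:relakpoint}, and noting that $\nn{a_k\psi_{n,s}}\le\nn{a_k\psi_n}$ by the decomposition $\HS\cong\FS\oplus\FS$, the task reduces (up to a Plancherel constant absorbed into $C$) to showing
\[
 \frac{1}{\ell+1}\int_{\IR^d}\nn{(a_{k+p}-a_k)\psi_n}^2\d k \le C\min\{1,\abs p^\delta\},
\]
with $C$ independent of $n$.

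\emph{The two regimes.} For $\abs p\ge 1$, the triangle inequality together with \cref{coro:photon bounds} \cref{photon bounds 1} gives $\int\nn{(a_{k+p}-a_k)\psi_n}^2 \d k\le 4C_\chi\nn{\omega^{-1/2}f}_2^2$, which is finite by \cref{hyp2} \cref{h:epsilon integrability} and matches $\min\{1,\abs p^\delta\}=1$. For $\abs p\le 1$, set $u(k)=\nn{(a_{k+p}-a_k)\psi_n}$ and apply H\"older's inequality with the conjugate pair $\bigl(\frac{1+\epsilon}{\epsilon},1+\epsilon\bigr)$:
\[
 \int u(k)^2\,\d k \le \left(\int u(k)\,\d k\right)^{\epsilon/(1+\epsilon)} \left(\int u(k)^{2+\epsilon}\,\d k\right)^{1/(1+\epsilon)}.
\]
The first factor is controlled by \cref{coro:photon bounds} \cref{photon bounds 2}, which yields $\int u\,\d k\le \abs p^\alpha\sup_{\abs p\le 1}\nn{h(p,\cdot)}_1$ and therefore contributes a factor $\lesssim \abs p^{\alpha\epsilon/(1+\epsilon)}$. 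The second factor is bounded uniformly in $p$ and $n$ by combining $u\le \nn{a_{k+p}\psi_n}+\nn{a_k\psi_n}$ with \cref{coro:photon bounds} \cref{photon bounds 1} and the $L^{2+\epsilon}$ integrability from \cref{hyp2} \cref{h:epsilon integrability}. Multiplying the two factors gives the claim with $\delta=\alpha\epsilon/(1+\epsilon)$.

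\emph{Main obstacle.} The delicate step is the interpolation: \cref{coro:photon bounds} \cref{photon bounds 2} provides the desired $\abs p^\alpha$ decay but only an $L^1$ bound in $k$, while \cref{coro:photon bounds} \cref{photon bounds 1} is uniform in $p$ but \emph{a priori} only in $L^2$. The H\"older exponents $\bigl(\frac{1+\epsilon}{\epsilon},1+\epsilon\bigr)$ are essentially forced by the constraints that the H\"older-difference bound must be used to the first power in $k$ and the pointwise size bound must be used to the power $2+\epsilon$; strict positivity of the resulting exponent $\alpha\epsilon/(1+\epsilon)$ therefore hinges crucially on the $L^{2+\epsilon}$ strengthening over $L^2$ assumed in \cref{hyp2} \cref{h:epsilon integrability}.
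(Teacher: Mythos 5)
Your proposal is correct and follows essentially the same route as the paper: Plancherel in the first variable plus \cref{eq:relakpoint} to reduce to the translation difference of $a_k\psi_n$, then interpolation between the $\abs{p}^\alpha$-decaying $L^1$ bound of \cref{coro:photon bounds} \cref{photon bounds 2} and the $p$-uniform $L^{2+\epsilon}$ bound coming from \cref{coro:photon bounds} \cref{photon bounds 1} and \cref{hyp2} \cref{h:epsilon integrability}, yielding the same exponent $\delta=\alpha\epsilon/(1+\epsilon)$. The only (cosmetic) difference is that you apply H\"older's inequality to the integral where the paper uses a pointwise interpolation followed by Young's inequality, and your explicit handling of the regime $\abs{p}\ge 1$ via the triangle inequality and \cref{photon bounds 1} is if anything slightly more careful than the paper's one-line remark.
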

\begin{proof}
	That the left hand side of  \cref{eq:estcomp1} is bounded 
by a constant $C$, uniformly in $p$,  follows easily due to the Fock space definition, since the Fourier transform preserves the $L^2$-norm.
	Now lets consider $|p| \leq 1$. Note  that
	\begin{align*}
		\int_{\IR^d} |1-e^{-ip y}|^2\left\|\widehat{\psi_{n,s}^{(\ell)}}(y)\right\|^2_{L^2(\IR^{(\ell-1)d})} \d y
		&= \int_{\IR^d} \nn{ \psi_{n,s}^{(\ell)}(k+p) - \psi_{n,s}^{(\ell)}(k)}^2 \d k\\
		&= \frac{1}{\ell+1}\int_{\IR^d} \nn{ (a_{k+p}\psi_{n,s})^{(\ell)} - (a_k\psi_{n,s})^{(\ell)}}^2 \d k , 
	\end{align*}
where we used \cref{eq:relakpoint}.
	Let $\theta \in (0,1)$. By \cref{coro:photon bounds}, we have some $C>0$ such that
	\[\nn{ (a_{k+p}\psi_{n,s})^{(\ell)} - (a_k\psi_{n,s})^{(\ell)}} \le C|p|^{\theta\alpha}h(p,k)^\theta\left(  \frac{\abs{f(k)}}{\sqrt{\omega(k)}}  \right) ^{1-\theta}. \]
	For $r,s > 1$ with $\frac1r+\frac1s = 1$, we now use Young's inequality $b c \leq b^r/r + c^s/s$ to obtain a constant $C_{r,s}>0$ with
	\begin{equation}
		\nn{ (a_{k+p}\psi_{n,s})^{(\ell)} - (a_k\psi_{n,s})^{(\ell)}}^2
		\le C_{r,s} \abs p^{2\theta \alpha } \left(  h(p,k)^{2\theta r} +   \left(  \frac{\abs{f(k)}}{\sqrt{\omega(k)}}  \right) ^{2(1-\theta)s }  \right).\label{eq:integrand summands}
	\end{equation}
	Set $r = \frac{1}{2\theta}$. Then, the first summand in \eqref{eq:integrand summands} is integrable in $k$ due to  \cref{coro:photon bounds}. Further, the exponent of the second summand equals
	\[
	2(1-\theta)s  = 2(1-\theta) \left(1-\frac{1}{r}\right)^{-1} =   \frac{2(1-\theta)}{1-2\theta}.
	\]
	Hence, we can choose $\theta > 0$ such that $ \dfrac{2(1-\theta)}{1-2\theta} = 2 + \epsilon$. By \cref{hyp2} \cref{h:epsilon integrability}, it follows that \cref{eq:integrand summands} is integrable in $k$ and the proof is complete.
\end{proof}
\noindent
From here, we can prove an upper bound for the Fourier term in \cref{def:form}.
\begin{lemma}\label{prop:localizationbound}
	Let $\delta>0$ be as in \cref{lem:fouriertransformbound}. Then there exists $C>0$ such that for all $n,\ell\in\IN$ and $s\in\{1,2\}$
	\[ \int_{\IR^{d\cdot\ell}}\sum_{i=1}^{\ell}  |x_i|^{\delta/2}\left|\widehat{\psi_{n,s}^{(\ell)}}(x_1,\ldots,x_\ell)\right|^2\d(x_1,\ldots,x_\ell) \le C.  \]
\end{lemma}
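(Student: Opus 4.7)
The plan is to reduce the claim to \cref{lem:fouriertransformbound} via the Riesz representation of fractional powers of $|x|$: for any $\alpha \in (0,2)$ there is a constant $c_{d,\alpha} > 0$ such that
\[
|x|^{\alpha} = c_{d,\alpha} \int_{\IR^d} \frac{|1-e^{-\i p\cdot x}|^2}{|p|^{d+\alpha}} \d p, \qquad x \in \IR^d.
\]
I will apply this with $\alpha = \delta/2$. Since the $\delta$ produced by \cref{lem:fouriertransformbound} is small (in fact less than $1$), the condition $\delta/2 \in (0,2)$ is automatic, so I may even shrink $\delta$ further if convenient.

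First I would use the permutation symmetry of $\widehat{\psi_{n,s}^{(\ell)}}$ in its $\ell$ arguments: each of the $\ell$ summands in $\sum_{i=1}^{\ell} |x_i|^{\delta/2}$ contributes equally when integrated against $|\widehat{\psi_{n,s}^{(\ell)}}|^2$, so the left-hand side equals
\[
\ell \int_{\IR^d} |y|^{\delta/2} \bigl\|\widehat{\psi_{n,s}^{(\ell)}}(y)\bigr\|^2_{L^2(\IR^{(\ell-1)d})} \d y.
\]
Next, I would insert the Riesz representation for $|y|^{\delta/2}$ and swap the $y$- and $p$-integrations by Fubini--Tonelli (valid because the integrand is nonnegative), obtaining
\[
c_{d,\delta}\, \ell \int_{\IR^d} \frac{1}{|p|^{d+\delta/2}} \left( \int_{\IR^d} |1-e^{-\i p\cdot y}|^2 \bigl\|\widehat{\psi_{n,s}^{(\ell)}}(y)\bigr\|^2_{L^2(\IR^{(\ell-1)d})} \d y \right) \d p.
\]

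Now I would apply \cref{lem:fouriertransformbound} to bound the inner integral by $\frac{C'}{\ell+1} \min\{1, |p|^{\delta}\}$. The prefactor $\ell/(\ell+1) \le 1$ is uniform in $\ell$, so everything reduces to showing
\[
\int_{\IR^d} \frac{\min\{1, |p|^{\delta}\}}{|p|^{d+\delta/2}} \d p < \infty.
\]
In polar coordinates, the integrand equals $|p|^{\delta/2 - d}$ on $\{|p| \le 1\}$ (integrable since $\delta/2 > 0$ gives a positive exponent $\delta/2 - 1$ after including the Jacobian $r^{d-1}$) and $|p|^{-d-\delta/2}$ on $\{|p| > 1\}$ (integrable since $\delta/2 > 0$). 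Hence the whole expression is bounded by a constant $C > 0$ uniform in $n$, $\ell$, and $s$, which is exactly the desired estimate.

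I do not anticipate any significant obstacle: the argument is essentially a standard Fubini-plus-representation device. The only nontrivial ingredients are the Riesz representation (well known) and \cref{lem:fouriertransformbound} (already established). Even the nonnegativity requirement for Fubini is automatic from $|1-e^{-\i p \cdot y}|^2 \ge 0$ and $|p|^{-d-\delta/2} > 0$.
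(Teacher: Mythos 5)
Your proof is correct and follows essentially the same route as the paper: the Riesz representation $|y|^{\delta/2}=c_{d,\delta/2}\int|1-e^{-\i p\cdot y}|^2|p|^{-d-\delta/2}\d p$ is exactly the identity the paper obtains via the substitution $q=|y|p$, and the Fubini swap, the application of \cref{lem:fouriertransformbound}, and the integrability check of $\min\{1,|p|^\delta\}|p|^{-d-\delta/2}$ all match. You additionally make explicit the permutation-symmetry step and the $\ell/(\ell+1)\le 1$ bookkeeping, which the paper leaves implicit.
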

\begin{proof}
From \cref{lem:fouriertransformbound}, we know that there exists a finite constant $C$ such
that
\[
\int_{\IR^d}\int_{\IR^{d}} \frac{|1 - e^{-ipy} |^2 \| \widehat{\psi_{n,s}^{(\ell)}}(y)
\|^2}{|p|^{\delta/2} } \d y \frac{\d p}{|p|^d} \leq \frac C{\ell+1}   \; .
\]
After interchanging the order of integration  and a change of
integration variables  $q = |y| p $, we find
\[
 \frac{C}{\ell+1}  \geq \int_{\IR^{d}} \| \widehat{\psi_{n,s}^{(\ell)}}(y) \|^2 \int_{\IR^d} \frac{ | 1 -
e^{- i p y} |^2}{|p|^{\delta/2} } \frac{\d p}{|p|^d} \d y = \int_{\IR^{d}} \|
\widehat{\psi_{n,s}^{(\ell)}}(y) \|^2 |y|^{\delta/2} \underbrace{ \int_{\IR^d} \frac{
| 1 - e^{- i q y/|y|} |^2}{|q|^{\delta/2} } \frac{ \d q}{|q|^d}  }_{=: \ c}  \d y 
\; ,
\]
where $c$ is nonzero and does not depend on $y$.	
\end{proof}
\noindent
We can now conclude.
\begin{proof}[\textbf{Proof of \cref{prop:eigenfunct-compact}}]
	Combine \cref{lem:numberbound,prop:localizationbound}.
\end{proof}
\begin{proof}[\textbf{Proof of \cref{maintheorem}.}]
By \cref{lem:compact,prop:eigenfunct-compact}, we know
there exists a subsequence $(\psi_{n_k})_{k\in\IN}$, which converges to a normalized vector
$\psi_\infty$. By the lower semicontinuity of non-negative quadratic forms, we
see from \cref{prop:minimizingsequ}  that
\[ \inn{  \psi_\infty , ( H  - E ) \psi_\infty }  \leq \liminf_{k \to \infty} \inn{  \psi_{n_k} , ( H - E)
\psi_{n_k} } = 0. \qedhere
\]
\end{proof}

\section*{Acknowledgements} 

D.H. wants to thank Ira Herbst for valuable discussions on the subject. 

\bibliographystyle{halpha-abbrv}
\bibliography{lit}

\end{document}